\documentclass[11pt]{article}
\usepackage[T1]{fontenc}
\usepackage[utf8]{inputenc}
\usepackage{graphicx}
\usepackage[english]{babel}
\usepackage{amsmath}
\usepackage{amssymb}
\usepackage{amsfonts}
\usepackage{color}
\usepackage{babel}
\usepackage{verbatim}
\usepackage[unicode=true,pdfusetitle,bookmarks=true,bookmarksnumbered=false,bookmarksopen=false,
 breaklinks=false,pdfborder={0 0 1},backref=false,colorlinks=true]{hyperref}
\hypersetup{linkcolor=blue,citecolor=blue}
\makeatletter
\usepackage{pdflscape}
\usepackage{cite}
\usepackage{amsthm}

\newtheorem{FLRWCompatibleProperty}{Property}

\newtheorem{Lemma}{Lemma}
\newtheorem{definition}{Definition}
\newtheorem*{schurs_lemma}{Schur's Lemma}
\newtheorem{corollary}{Corollary}
\newtheorem{theorem}{Theorem}

\DeclareMathOperator{\Stab}{Stab}
\newcommand{\id}{\mathrm{id}}

\title{\bf FLRW-Compatible Tensors and Perfect-Fluid Universality in Modified Gravity}

\author{Cem Dalk\i rmaz$^{a,b}$\thanks{
email: cdalkirmaz@student.ethz.ch}, and Yaghoub Heydarzade$^c$\thanks{
email: yheydarzade@bilkent.edu.tr}, \\
\\{\small $^a$Department of Physics, ETH Zurich, 8093 Zurich, Switzerland}
\\{\small $^b$Department of Physics, Bilkent University, 06800 Ankara, T{\" u}rkiye}
\\{\small $^c$Department of Mathematics, Bilkent University, 06800 Ankara, T{\" u}rkiye}}
\date{}

\begin{document}

\maketitle

\begin{abstract}
We study FLRW-compatible tensor fields on Friedmann--Lema\^itre--Robertson--Walker spacetimes in $(D+1)$ dimensions equipped with Levi-Civita connection. We prove that any covariant rank-$N$ tensor field invariant under the spatial homogeneity and isotropy group, with $D>N$, admits a finite decomposition into monomials constructed from the metric tensor $g_{\mu\nu}$ and the unit timelike vector $u_\mu$, with scalar coefficients depending only on cosmic time. As a direct consequence, any rank-two correction term appearing in an Einstein-like rearrangement of the FLRW background field equations has the effective perfect-fluid form for $D>2$. Thus, the result provides an arbitrary-rank tensorial classification underlying the perfect-fluid structure of modified gravity equations on FLRW backgrounds, complementing existing rank-two universality results. Applications to generalized Friedmann equations and to the geodesic deviation equation in modified gravity are also discussed.
\end{abstract}

\section{Introduction}
\par To study the theory of general relativity, or a modified theory of gravity, in compliance with FLRW metric, it is standard practice to impose the homogeneity and isotropy to all dynamical fields of the theory. These are the tensor fields of any rank that are varied in the action, each having their corresponding equations of motion. For example, the metric tensor in usual General Relativity, the scalar field $\phi(t)$ in Brans-Dicke \cite{brans-dicke}, the aether field in Einstein-Aether Theory \cite{Einstein-Aether} are dynamical fields of their respective theories.
\par Consequently, when these diverse degrees of freedom are evaluated on FLRW spacetimes, powerful background level universality results emerge. In particular, it was shown in \cite{SVT-FLRW} that, once scalar and vector fields are restricted by homogeneity and isotropy, the metric field equations of a broad class of scalar--vector--tensor theories reduce to an Einstein-like form with an effective perfect-fluid source. On the other hand, when the theory lacks any dynamical field in the gravitational sector besides the metric tensor, the same result can be attained without such restriction. That is, for curvature driven theories such as $f(R)$, $f(R,\mathcal{G})$, Lovelock Gravity and any higher order gravity theories, the perfect fluid form of the field equations emerges dynamically from the properties of "perfect scalars" not only in FLRW \cite{metin}, but in Generalized Robertson-Walker (GRW) spacetimes with harmonic (divergence free) Weyl tensor \cite{PF in f(R), PF in higher order GR, PF in extended gravity}. Because the metric tensor is the sole fundamental field in these modified theories, the perfect fluid structure is solely dictated by the nature of the background geometry without imposing additional constraints on the dynamical fields of the matter sector.

\par To accommodate generalized theories containing arbitrary auxiliary fields (e.g. Einstein-Aether theory), our work approaches the problem from a complementary, symmetry-constrained tensorial perspective. Because these higher rank tensors and vectors do not automatically inherit the background symmetry, we impose what we call FLRW compatibility on all dynamical fields as an independent postulate, i.e. applying the Cosmological Principle to the fields themselves. In doing so, rather than restricting attention to scalar and vector degrees of freedom, symmetric rank-two field equations, or pure curvature objects, the group theoretic framework provides a comprehensive classification of rank-$N$ FLRW compatible tensor fields. By enforcing this algebraic constraint on the dynamical fields, any scalar quantity automatically qualifies as a "perfect scalar" according to the nomenclature introduced in \cite{PF in higher order GR}. The perfect fluid structure of rank-two correction terms in the field equations then follows as a corollary of this general monomial decomposition in a top-down manner.

\par In Section \ref{Section: FLRW Compatibility}, we begin by defining FLRW-compatible tensor fields and analyzing the form of dynamical scalar, vector, and rank-two tensor fields constrained by the Cosmological Principle in an FLRW geometry. Then, by means of a strong inductive argument, we prove Theorem \ref{theorem 1}. Informally, this theorem states that in a $(D+1)$-dimensional FLRW spacetime equipped with the Levi-Civita connection, any rank-$N$ tensor respecting homogeneity and isotropy (where $D>N$) is given by a linear combination of all possible rank-$N$ tensor products of the metric tensor $g_{\alpha\beta}$ and the timelike foliation vector $u_\alpha = -\delta_\alpha^0$. That is, any such tensor can be expressed as:

 \begin{equation}
 T_{\mu_1 \ldots \mu_N} =\sum\limits_{r}\sum\limits_{\sigma \in S_N}\alpha_{\sigma r}(t)\hspace{0.1cm}\sigma \cdot \left( g_{\mu_1\mu_2} \otimes \ldots \otimes g_{\mu_{2r-1}\mu_{2r}} \otimes u_{\mu_{2r+1}}\otimes \ldots \otimes u_N \right),
\end{equation}
 for some $\alpha_{\sigma r}(t) \in \mathbb{F}$, the underlying field of the vector space ($\mathbb{C}$ or $\mathbb{R}$), and the action of $\sigma \in S_N$ (the symmetric group) permutes the order of indices in the product, yielding all possible combinations.
 
After reviewing FLRW geometry in section \ref{section: review of FLRW geometry}, we analyze the field equations of general modified theories of gravity obtained by the variation of actions of the following form in Section \ref{Section: field equations of generalized GR in cosmology}.
 \begin{equation}
 S = \int d^{(D+1)}x\sqrt{-g}\left\{ \mathcal{L}_{fields}\left(g_{\mu\nu}, R^\rho_{\sigma\mu\nu}, \nabla_\alpha, A^{(i)}_{\mu_1\ldots\mu_k}, \ldots\right)+\mathcal{L}_{matter}\right\}, \label{generalized action}
\end{equation}
 where $\mathcal{L}_{matter}$ is the usual matter Lagrangian and $\mathcal{L}_{fields}$ is the generalized version of Einstein--Hilbert Lagrangian, a scalar function constructed from the contractions of metric tensor $g_{\mu\nu}$, Riemann tensor $R^\rho_{\sigma\mu\nu}$, rank-k dynamical tensor fields $A^{(i)}_{\mu_1\ldots\mu_k}$ labeled by $i$ with $k \in \{0,\ldots, N\}$ (these are specific to the modified theory), their covariant derivatives of any order and their tensor products. Then, using Theorem \ref{theorem 1} and Theorem 5 in \cite{metin} (which states the form of the field equations of generic modified theories of gravity in FLRW geometry when the action is constructed exclusively from the Riemann tensor and its covariant derivatives), we state the following tensorial consequence.

\noindent
\begin{corollary} \label{main corollary} Suppose that the metric field equation obtained by varying the action in \eqref{generalized action} can be rearranged into an Einstein-like form \begin{equation} G_{\mu\nu} = \frac{1}{\xi}T_{\mu\nu} + Q_{\mu\nu},\end{equation} where $G_{\mu\nu}$ and $T_{\mu\nu}$ are the Einstein tensor and the usual energy--momentum tensor, respectively; $\xi$ is a scalar function constructed from contractions of the dynamical tensor fields appearing in the theory, their covariant derivatives of any order, and their tensor products; and $Q_{\mu\nu}$ is the remaining second-rank correction term constructed from similar contractions and products. Then, in a $(D+1)$-dimensional FLRW spacetime with $D>2$, $Q_{\mu\nu}$ necessarily has the perfect-fluid form \begin{equation} Q_{\mu\nu} = \alpha(t)g_{\mu\nu} + \beta(t)u_\mu u_\nu, \end{equation} for some time-dependent scalar functions $\alpha(t)$ and $\beta(t)$. Moreover, $\xi$ is spatially homogeneous and therefore depends only on cosmic time, \begin{equation} \xi = \xi(t). \end{equation} \end{corollary}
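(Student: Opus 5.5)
The plan is to reduce the corollary to Theorem \ref{theorem 1}. The work is to show that $Q_{\mu\nu}$ and $\xi$ are themselves FLRW-compatible, i.e.\ invariant under the isometry group $G$ of spatial homogeneity and isotropy. Then the rank-two case $N=2<D$ and the rank-zero case of the theorem give the result.

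\emph{Step 1: the building blocks are $G$-invariant.} For every isometry $\phi\in G$ of the FLRW metric we have $\phi^*g=g$. Since the Levi-Civita connection is built naturally from $g$, we also get $\phi^*\nabla=\nabla$, and hence $\phi^*R^\rho{}_{\sigma\mu\nu}=R^\rho{}_{\sigma\mu\nu}$. By the FLRW-compatibility postulate, every dynamical field satisfies $\phi^*A^{(i)}=A^{(i)}$. The matter sector is taken to respect the Cosmological Principle as well, so $\phi^*T_{\mu\nu}=T_{\mu\nu}$.

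\emph{Step 2: invariance is preserved by the allowed operations.} Pullback by a diffeomorphism commutes with tensor products and contractions with $g^{\mu\nu}$. Because $\phi$ is an isometry, it also commutes with $\nabla$, so $\phi^*(\nabla S)=\nabla(\phi^*S)$. We then argue by induction on the number of operations used to build an expression. Every tensor obtained from invariant fields by finitely many products, contractions and covariant derivatives is $G$-invariant. In particular $\xi$ is a $G$-invariant scalar, and $Q_{\mu\nu}$ is a $G$-invariant covariant rank-two tensor. Since $G_{\mu\nu}$ and $T_{\mu\nu}$ are invariant too, this is consistent with the rearranged equation $Q_{\mu\nu}=G_{\mu\nu}-\xi^{-1}T_{\mu\nu}$. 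Where the field equation is obtained by variation, I would appeal to Theorem 5 of \cite{metin} for the pure-curvature part. For the remaining terms it suffices to note that the Euler--Lagrange expressions are themselves such natural contractions.

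\emph{Step 3: apply Theorem \ref{theorem 1}.} A $G$-invariant scalar satisfies $\xi(\phi(p))=\xi(p)$. The group $G$ acts transitively on each constant-$t$ hypersurface, so $\xi=\xi(t)$; equivalently, this is the $N=0$ case of the theorem. For $Q_{\mu\nu}$ we take $N=2$ and require $D>2$. The theorem then gives $Q_{\mu\nu}$ as a combination of the monomials $g_{\mu\nu}$, $g_{\nu\mu}$, $u_\mu u_\nu$ and $u_\nu u_\mu$ with coefficients depending only on $t$. Since $g$ and $u\otimes u$ are symmetric, these collapse to $\alpha(t)g_{\mu\nu}+\beta(t)u_\mu u_\nu$.

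\emph{Main obstacle.} The delicate point is Step 2. I must make sure that the rearrangement producing $\xi$ and $Q$ uses only natural operations, such as division by a nonvanishing invariant scalar. Any external non-invariant structure, for instance explicit coordinate-dependent coefficients, would break the argument, so the hypothesis "constructed from similar contractions and products" has to be read as excluding it. I would handle this by stating invariance under pullback as a lemma for natural tensor operations and checking $\phi^*\nabla=\nabla$ from the uniqueness of the Levi-Civita connection.
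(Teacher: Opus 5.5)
Your proposal is correct and follows essentially the paper's route. You show that FLRW-compatibility is preserved under tensor products, contractions, division by an invariant scalar and covariant derivatives; this is the paper's Properties 1--3, including invariance of the Levi-Civita connection, with Theorem~5 of \cite{metin} for the curvature sector. You then apply Theorem~\ref{theorem 1} at rank $0$ to get $\xi$ and at rank $2$ with $D>2$ to get $Q_{\mu\nu}$. The only difference is cosmetic: you phrase invariance through pullbacks by finite isometries rather than vanishing Lie derivatives along spatial Killing vectors, and the two are equivalent on the connected isometry group.
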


\par Corollary \ref{main corollary} is a background-level tensorial statement. It does not imply that all modified gravity theories are dynamically equivalent on FLRW backgrounds. Rather, it states that once the equations are restricted to an FLRW-compatible sector, the rank-two tensorial structure of the metric field equation is forced into the perfect-fluid sector. The theory dependence remains in the scalar functions multiplying $g_{\mu\nu}$ and $u_\mu u_\nu$, and in the additional field equations of the particular theory.

\par To appreciate the scope of this background-level restriction, it is useful to contrast Corollary \ref{main corollary} with Lovelock’s theorem \cite{lovelock}, which classifies the symmetric rank-two tensors built locally
from the metric and its derivatives up to second order with identically vanishing
divergence: these are the Lovelock tensors in arbitrary dimension, and
$a\,G_{\mu\nu}+b\,g_{\mu\nu}$ with constant $a,b$ in four dimensions. A
perfect-fluid source is not an assumption of that theorem; it follows once the
field equations are specialized to a homogeneous and isotropic background, since
the left-hand side is then built solely from $g_{\mu\nu}$ and $u_\mu$.

\par Our classification answers a complementary question. Fixing the geometry to be FLRW, we classify the tensor fields compatible with its
symmetry; these need not be built from the metric alone, need not be of bounded
derivative order, and need not be divergence-free, the divergence-free condition
playing no role in the proof of Theorem \ref{theorem 1}. Accordingly, the
coefficients of Lovelock's classification are constants, while ours are functions
of cosmic time. The two intersect on FLRW backgrounds in both directions. Every
Lovelock tensor is built from the metric alone, hence is FLRW-compatible, and by
Corollary \ref{main corollary} reduces to
$\alpha(t)g_{\mu\nu}+\beta(t)u_\mu u_\nu$ for any $D>2$, recovering the explicit
general Lovelock and $\mathcal{F}(R,\mathcal{G})$ cases of \cite{metin} as
instances of the general classification (also see \cite{PF in higher order GR}). Conversely, since Lovelock's theorem
implies that any genuine modification of general relativity in four dimensions
must introduce additional dynamical fields, Theorem \ref{theorem 1} shows that
such fields, whatever their rank, contribute no new tensorial structure on an
FLRW background, the theory dependence surviving entirely in the scalar functions
multiplying $g_{\mu\nu}$ and $u_\mu u_\nu$. 
\par While our framework fixes the geometry to deduce the tensorial form, a closely related but logically reversed question is addressed in \cite{PF to be GRW}, where a condition is given for a perfect-fluid spacetime to
be a generalized Robertson--Walker spacetime. The logical direction there is the
reverse of the one taken here: the perfect-fluid form of the Ricci tensor is the
hypothesis and the geometry is the conclusion, whereas in Corollary
\ref{main corollary} the geometry is the hypothesis and the tensorial form is the
conclusion. The two settings differ also in strength. A generalized
Robertson--Walker spacetime is characterized by a timelike torse-forming unit
vector rather than by the full spatial isometry group, so the results of
\cite{PF to be GRW} rest on weaker geometric assumptions and do not require the
fields to inherit the background symmetry, while Theorem \ref{theorem 1} holds
for tensors of arbitrary rank and for arbitrary dynamical field content but is
confined to the exactly symmetric background. Extending the present
classification to generalized Robertson--Walker geometries, where the spatial
isometry group underlying our proof is no longer available, is therefore a
natural direction that the symmetry-based method used here does not cover.

\par In the last section \ref{Section GDE in modified GR}, we utilize Corollary \ref{main corollary} to analyze the geodesic deviation equation in modified theories of gravity for cosmological solutions.

\par Throughout the paper, we will be working in a $D+1$ spacetime with mostly-plus signature, and denote the time coordinate $t$ by $x^0$, and the spatial coordinates by Latin indices, i.e. $x^i$ with $i = 1,..., D$.

\section{FLRW-Compatible Tensor Fields} \label{Section: FLRW Compatibility}
\par We start with the following definition.

\textbf{Definition:} A tensor field on an FLRW spacetime is called
\textit{FLRW-compatible} if it is invariant under the full isometry group of the
FLRW background, equivalently, if its Lie derivative vanishes along every Killing
vector field generating the spatial homogeneity and isotropy of the FLRW geometry.

 For an $N$ rank FLRW compatible tensor field $T_{\mu_1 \ldots \mu_N}$, this implies
\begin{align}
 \mathcal{L}_\mathcal{K}\left(T_{\mu_1 \ldots \mu_N}\right) = 0,
\end{align}
for any spatial Killing vector $\mathcal{K}$ of the FLRW metric. In other words, the Lie derivative of an FLRW compatible tensor vanishes along any arbitrary spatial Killing vector, i.e., an infinitesimal generator $\mathcal{K}^\mu$ of an isometry of the FLRW metric with $\mathcal{K}^0 = 0$. In this regard, all dynamical fields of the theory are restricted to the FLRW compatibility condition.
\par With these definitions, some of the properties of FLRW compatible tensors are as follows:

\begin{FLRWCompatibleProperty}\label{FLRW Compatible statement 1}
The contraction of two FLRW-compatible tensors is FLRW-compatible.
\end{FLRWCompatibleProperty}
\begin{proof}
 Given two FLRW compatible tensors, $A_{\mu_1 \ldots \mu_N}$ and $B_{\nu_1 \ldots \nu_M}$, 
 \begin{align}
 \mathcal{L}_\mathcal{K}\left( A_{\mu_1 \ldots \mu_N}B_{\nu_1 \ldots \nu_M}g^{\mu_i\nu_j} \right) = &\mathcal{L}_\mathcal{K}\left( A_{\mu_1 \ldots \mu_N} \right)B_{\nu_1 \ldots \nu_M}g^{\mu_i\nu_j} \notag\\&+A_{\mu_1 \ldots \mu_N} \mathcal{L}_\mathcal{K}\left( B_{\nu_1 \ldots \nu_M}\right )g^{\mu_i\nu_j} + A_{\mu_1 \ldots \mu_N} B_{\nu_1 \ldots \nu_M}\ \mathcal{L}_\mathcal{K}\left(g^{\mu_i\nu_j}\right) \notag \\ &= 0.
\end{align}
 Because, by definition, $\mathcal{L}_\mathcal{K}\left( g_{\alpha\beta} \right) = 0$ and $\mathcal{L}_\mathcal{K}\left( g^{\alpha\beta} \right) = 0$.
\end{proof}

\begin{FLRWCompatibleProperty}\label{FLRW Compatible statement 2}
The tensor product of two FLRW-compatible tensors is FLRW-compatible.
\end{FLRWCompatibleProperty}
\begin{proof}
 For two such tensors as before,
 \begin{align}
 \mathcal{L}_\mathcal{K}\left( A_{\mu_1 \ldots \mu_N} \otimes B_{\nu_1 \ldots \nu_M} \right) = \mathcal{L}_\mathcal{K}\left( A_{\mu_1 \ldots \mu_N}\right)\otimes B_{\nu_1 \ldots \nu_M} + A_{\mu_1 \ldots \mu_N}\otimes\mathcal{L}_\mathcal{K}\left(B_{\nu_1 \ldots \nu_M}\right) = 0.
\end{align}
\end{proof}
\begin{FLRWCompatibleProperty}\label{FLRW Compatible statement 3}
The covariant derivative of an FLRW-compatible tensor is FLRW-compatible.
\end{FLRWCompatibleProperty}
\begin{proof}
 Assuming metric compatibility, we have
 \begin{equation}
 \nabla_\alpha g_{\mu\nu} = 0.
\end{equation}
 In addition,
 \begin{equation}
 \nabla_{\alpha}u_\beta = H(t)h_{\alpha\beta} = H(t)(g_{\alpha\beta} + u_{\alpha}u_\beta),
\end{equation}
 
 \noindent where $H(t)$ is the Hubble parameter; and
 \begin{equation}
 \nabla_\alpha \theta(t) = -u_\alpha\dot{\theta}(t),
\end{equation}
 for any scalar field $\theta(t)$ which depends only on time.
 Then, the rest follows from Theorem \ref{theorem 1}. (We can safely use Theorem \ref{theorem 1} here, since its proof is independent of this property.)
\end{proof}

\par We remark that the same statement can also be justified directly from the invariance of the Levi--Civita connection under isometries, without invoking Theorem \ref{theorem 1}. If $\mathcal{K}$ is a Killing vector of the FLRW background, then the flow generated by $\mathcal{K}$ preserves $g_{\mu\nu}$ and hence also preserves its Levi--Civita connection, see Chapter 1 in \cite{kentaro}. Equivalently,
\begin{equation}
    \mathcal{L}_{\mathcal{K}}\Gamma^{\rho}_{\alpha\beta}=0.
\end{equation}
For a covariant tensor $T_{\mu_1\cdots\mu_N}$ one therefore has
\begin{align}
    \mathcal{L}_{\mathcal{K}}\left(\nabla_\alpha T_{\mu_1\cdots\mu_N}\right)=\nabla_\alpha\left(\mathcal{L}_{\mathcal{K}}T_{\mu_1\cdots\mu_N}\right)-\sum_{j=1}^N\left(\mathcal{L}_{\mathcal{K}}\Gamma^{\rho}_{\alpha\mu_j}\right)T_{\mu_1\cdots\rho\cdots\mu_N}.
\end{align}
Both terms vanish whenever $T_{\mu_1\cdots\mu_N}$ is FLRW-compatible. Thus $\nabla_\alpha T_{\mu_1\cdots\mu_N}$ is FLRW-compatible. This provides an independent geometric proof of Property \ref{FLRW Compatible statement 3}; a complementary invariant-theoretic proof of the full monomial decomposition is given in Appendix \ref{appendix:invariant-proof}.

It is also worth to compare FLRW compatibility with the
algebraic notion of Weyl compatibility introduced in \cite{weyl-compatibility}, where a symmetric tensor $b_{\mu\nu}$ is called Weyl compatible if $b_{\mu\rho}C^{\rho}{}_{\nu\alpha\beta}$ has vanishing cyclic sum over its first three indices, and Riemann compatible if the same holds with Weyl replaced by the Riemann tensor \cite{riemann-compatibility}. Although Weyl and Riemann compatibility measures are well suited for characterizing curvature based constraints on tensor fields, they prove inadequate for a symmetry based analysis reflecting the full symmetry group of the background geometry. Specifically, Weyl compatibility becomes trivial in FLRW geometry as the Weyl tensor vanishes, while Riemann compatibility in FLRW only dictates commutation with the Ricci tensor \cite{weyl-compatibility}. This only imposes the foliation vector $u_{\alpha}$ to be an eigenvector of the given tensor $A_{\alpha\beta}$, i.e  $u_\beta A_{\alpha\gamma} u^\gamma = u_\alpha A_{\beta\gamma} u^\gamma$, which successfully eliminates space-time cross components of the tensor ($A_{0j}$), but leaves the spatial sector unconstrained. Therefore, Riemann compatibility is mathematically insufficient to recover the isotropic perfect-fluid structure we will get from imposing FLRW compatibility. Furthermore, it lacks a natural generalization beyond rank-two tensors, as it naturally follows from the definition of a Codazzi tensor. In contrast, the Lie-derivative approach under FLRW compatibility provides a direct, arbitrary-rank classification tailored to reflect the background symmetry.

\par In the following sections, we obtain the explicit forms of FLRW-compatible scalar fields, vector fields, and rank-two tensor fields, building on Chapter~4 in \cite{homogeneity}. The scalar and vector field results
will serve as the base steps for the inductive argument to prove Theorem \ref{theorem 1}, and the second rank case result will be a proof of concept for higher orders.

\subsection{Scalar Fields}
\par We consider an FLRW-compatible scalar field $\phi(x)$. To impose homogeneity and isotropy, we demand that the field is Lie transported along an arbitrary spatial isometry of the metric. In other words, Lie derivative of the scalar field $\phi$ must vanish along an arbitrary spatial Killing vector $\mathcal{K}$, i.e., $\mathcal{K}^0 = 0$. 
\par Taking $\mathcal{K}(x)$ to be such a spatial Killing vector, the killing-type equation for the field $\phi$ is,
\begin{align}
 \mathcal{L}_\mathcal{K}\phi(x) = \mathcal{K}^\alpha \partial_\alpha \phi(x) = \mathcal{K}^i\partial_i \phi(x) = 0, \label{killing equation for a scalar field}
\end{align}
where the Greek indices such as $\alpha$ run over $0,.., D$, i.e., all the components, and the Latin indices such as $i$ run over $1,..., D$ spatial components only.
\par Because the Killing vector $\mathcal{K}^i(x)$ is arbitrary, 
\eqref{killing equation for a scalar field} implies:
\begin{align}
 \partial_i\phi(x) = 0,
\end{align}
that is, $\phi(x) = \phi(t)$. Any dynamical scalar field respecting the cosmological principle has only time dependence.

\subsection{Vector Fields} \label{section: vector fields}
\par Considering an FLRW-compatible vector field $A^\alpha$, the same requirement applies. We demand that the vector field is Lie transported along an arbitrary spatial isometry, $\mathcal{K}(x)$ with $\mathcal{K}^0(x) = 0$, of the metric.

\begin{equation}
 \mathcal{L}_\mathcal{K}A^\alpha = \mathcal{K}^\beta(x)\partial_\beta A^\alpha - A^\beta\partial_\beta \mathcal{K}^\alpha(x) = 0.\label{killing equation for a vector field}
\end{equation}
\par Setting $\alpha = 0$, and using $\mathcal{K}^0 = 0$, the killing-type equation \eqref{killing equation for a vector field} for the vector field $A^\alpha$ reduces to the following.
\begin{equation}
 \mathcal{K}^\beta(x)\partial_\beta A^0 = \mathcal{K}^i(x)\partial_i A^0 = 0. \label{time component of vector field killing equation}
\end{equation}
As the Killing vector $\mathcal{K}^i(x)$ is arbitrary, \eqref{time component of vector field killing equation} implies
\begin{equation}
 \partial_iA^0 = 0,
\end{equation}
where $x^i$ represent the spatial coordinates, hence 
\begin{equation}
 A^0 = A^0(t),
\end{equation}

\par As the next step, we set $\alpha = i$, i.e., the spatial coordinates.
\begin{equation}
 \mathcal{K}^j(x)\partial_j A^i - A^\beta\partial_\beta \mathcal{K}^i(x) = 0. \label{vector field killin eqn}
\end{equation}
\par Because the FLRW geometry is maximally symmetric and isotropic about every point, at an arbitrary point P there always exists a rotational Killing vector centered at P and vanishing at the same point P. Furthermore, its covariant derivative evaluated at the same arbitrary point P is also arbitrary, only subject to antisymmetry by the killing equation (see p. 378 in \cite{weinberg}). Denoting such arbitrarily chosen rotational Killing vector on the effectively three-dimensional spatial hypersurface and centered at P, by $\mathcal{R}^m$, this statement implies,
\begin{align}
 \left.\mathcal{R}^m\right|_P = 0, \label{vanishing of rotational Killing vector at a point}
\end{align}
and $\left (\nabla_\alpha \mathcal{R}^m)\right|_{P} = \left (\partial_\alpha \mathcal{R}^m)\right|_{P}$ being arbitrary at point P.

Then, writing the equation \eqref{vector field killin eqn} for a rotational Killing vector $\mathcal{R}^m$ centered at an arbitrary point P, and evaluating it at the same point gives
\begin{equation}
 \left.(A^\beta\partial_\beta\mathcal{R}^i)\right|_P = 0, \label{Vector field killing eqn step 1}
\end{equation}
where the first term vanishes by \eqref{vanishing of rotational Killing vector at a point}. \par Furthermore, because we are working locally at some point P, the killing equation becomes
\begin{equation}
 \left(\partial_\beta\mathcal{K}^\alpha + \partial^\alpha \mathcal{K}_\beta\right)|_P = 0. \label{local killing equation}
\end{equation}
\par In addition, $\mathcal{K}_0 = g_{0\alpha} \mathcal{K}^\alpha = g_{00} \mathcal{K}^0 = 0$ for a diagonal metric, as is the case in FLRW geometry. Thus, by \eqref{local killing equation}, at a local point P we have the following
\begin{align}
 \partial_0 \mathcal{K}^\alpha |_P = -\partial^\alpha \mathcal{K}_0 |_P= 0.
\end{align}
As a result, $A^0\partial_0 \mathcal{R}^i$ term in \eqref{Vector field killing eqn step 1} gives equivalently zero, yielding:
\begin{align}
 \left(A^j\partial_j\mathcal{R}^i\right)|_P= 0. \label{Vector field killing eqn step 2}
\end{align}

In the remainder of the calculation, we will omit explicitly indicating evaluation at the arbitrary point P, as it is understood from the context.
Using \eqref{local killing equation} in \eqref{Vector field killing eqn step 2} we manipulate the equation as follows,
\begin{align}
 A^j\partial^i \mathcal{R}_j = 0, \notag\\
 A^j\delta^i_n\partial^n \mathcal{R}_j = 0,\notag\\
 (A^j\delta^i_n)\frac{1}{2}\left( \partial^n\mathcal{R}_j - \partial_j\mathcal{R}^n\right) = 0,\notag\\
 (A^j\delta^i_n)\partial^n\mathcal{R}_j - (A^j\delta^i_n)\partial_j\mathcal{R}^n = 0,
\end{align}
where we again used the antisymmetry of $\partial_i \mathcal{R}^j$ by \eqref{local killing equation}. Now, interchanging the dummy indices $j$ and $n$ in the second term and combining gives, with $h_{\alpha\beta}$ the induced metric on the spatial hypersurface,
\begin{equation}
 \partial^n\mathcal{R}_j\left( A^j\delta^i_n - A_nh^{ij} \right) = 0. \label{Vector field homogeneity equation 1}
\end{equation}
Since $\partial^n\mathcal{R}_j$ is arbitrary and the term in the brackets is antisymmetric, equation \eqref{Vector field homogeneity equation 1} implies
\begin{equation}
 A^j\delta^i_n - A_nh^{ij} = 0.
\end{equation}
Contracting $i$ and $n$, we get:
\begin{equation}
 (D-1)A^j = 0.
\end{equation}
Thus, unless we have only a single spatial dimension in our theory, $A^j = 0$, that is the spatial part of our vector field must vanish.
\par As a result, the dynamical vector field must have the form
\begin{equation}
 A^\alpha = A^0(t)u^\alpha,
\end{equation}
in other words, parallel to the foliation vector field $u^\alpha = \delta^\alpha_0$. Normally, this result is only valid at the specified arbitrary point P. However, owing to homogeneity, it is valid throughout the entire spacetime.
\par Therefore, any dynamical vector field $A^\alpha(x)$ in the theory, complying with the cosmological principle, has the following form:
\begin{equation}
 A^\alpha(x) = \lambda(t)u^\alpha, \label{general vector field result}
\end{equation}
where $\lambda(t)$ is a scalar function dependent on the time coordinate only.
\subsection{Second Rank Tensor Fields} \label{section: second-rank tensor fields}

\par We use a similar argument for a second rank FLRW compatible tensor field $A_{\alpha\beta}$. As before, we impose that the tensor field is Lie transported along an arbitrary spatial isometry.
\begin{align}
 \mathcal{L}_\mathcal{K}A_{\alpha\beta} =\mathcal{K}^\gamma\partial_\gamma A_{\alpha\beta} + A_{\gamma\beta}\partial_\alpha \mathcal{K}^\gamma + A_{\alpha \gamma}\partial_\beta \mathcal{K}^\gamma = 0. \label{killing equation for second-rank tensor}
\end{align}
\par We consider $\alpha =i$, $\beta = j$, spatial coordinates. Since $\mathcal{K}$ is a spatial Killing vector, $\gamma = k$, spatial components.
\begin{equation}
 \mathcal{K}^k\partial_k A_{ij} + A_{k j}\partial_i \mathcal{K}^k + A_{i k}\partial_j \mathcal{K}^k = 0, \label{2nd rank tensor killing eqn}
\end{equation}

Again, we consider an arbitrary rotational spatial Killing vector $\mathcal{R}^k$ centered at P, thus $\left(\mathcal{R}^k\right)|_P = 0$ with $\left(\partial_j \mathcal{R}^k)\right|_{P}$ arbitrary at an arbitrary point P.
Then, evaluating the equation \eqref{2nd rank tensor killing eqn} at P gives
\begin{align}
 A_{kj}\partial_i \mathcal{R}^k + A_{i k}\partial_j \mathcal{R}^k &= 0, \notag\\
 (A_{kj}\delta^n_i)\partial_n\mathcal{R}^k + (A_{ik}\delta_j^n)\partial_n \mathcal{R}^k &= 0, \notag\\
 (A_{kj}\delta^n_i)\frac{1}{2}\left( \partial_n\mathcal{R}^k - \partial^k\mathcal{R}_n \right) + (A_{i k}\delta_j^n)\frac{1}{2}\left( \partial_n\mathcal{R}^k - \partial^k\mathcal{R}_n \right) &= 0,\notag\\
 (A_{kj}\delta^n_i)\partial_n\mathcal{R}^k - (A_{kj}\delta^n_i)\partial^k\mathcal{R}_n + (A_{ik}\delta_j^n)\partial_n\mathcal{R}^k - (A_{ik}\delta_j^n)\partial^k\mathcal{R}_n&=0,
\end{align}
where we used the antisymmetry of $\partial_m\mathcal{R}^i$ by the local killing equation \eqref{local killing equation}. Now, interchanging the dummy indices $n$ and $k$ in the second and the fourth term, and combining the terms yield:
\begin{equation}
 \left( A_{kj}\delta^n_i - A^n_{j}h_{ki}\right)\partial_n\mathcal{R}^k + \left( A_{ik }\delta^n_j -A^n_i h_{kj}\right)\partial_n\mathcal{R}^k = 0. \label{tensor field homogeneity equation 1}
\end{equation}
Because $\partial_n\mathcal{R}^k$ is arbitrary, equation \eqref{tensor field homogeneity equation 1} implies:
\begin{equation}
 A_{kj}\delta^n_i - A^n_{j}h_{ki} + A_{ik }\delta^n_j -A^n_i h_{kj} = 0.
\end{equation}
Contracting $i$ and $n$, we get
\begin{align}
 A_{kj}D - A_{kj} + A_{jk} - A^{(3)}h_{kj} &= 0, \notag\\
 A_{kj}(D-1) + A_{jk} &= A^{(3)}h_{kj}, \label{second-rank tensor, dimension eqn}
\end{align}
where $A^{(3)}$ is the trace of $A$ over the spatial part. Interchanging the indices $k$ and $j$ and removing from \eqref{second-rank tensor, dimension eqn} gives,
\begin{equation}
 (D-2)(A_{kj} - A_{jk}) = 0.
\end{equation}
Thus, unless we have only 2 spatial dimensions in our theory, spatial part of any second-rank tensor subject to cosmological principle must be symmetric.
\par Using this result in \eqref{second-rank tensor, dimension eqn}, $A_{kj}$ can be written as
\begin{equation}
 A_{kj} = \frac{A^{(3)}}{D}h_{kj},
\end{equation}
where $h_{kj}$ is the induced spatial metric. As before, this result is only valid at an arbitrary local point P, however, homogeneity requires it to hold throughout the manifold. Therefore, it applies at any arbitrary point $x^\mu$ as well.
\par To check the coordinate dependence of the trace $A^{(3)}$, we note that it is a dynamical scalar field as well by the product rule for the Lie derivative.
\begin{align}
 \mathcal{L}_\mathcal{K}\left( A^{(3)}\right) = \mathcal{L}_\mathcal{K}\left( h^{ij}A_{ij}\right) = \mathcal{L}_\mathcal{K}(h^{ij})A_{ij} + \mathcal{L}_\mathcal{K}(A_{ij})h^{ij} = 0,
\end{align}
where the terms vanish because $h_{ij}$ and $A_{ij}$ are FLRW-compatible fields.
Therefore,
\begin{equation}
 A^{(3)} = A^{(3)}(t).
\end{equation}
In the end, any dynamical second-rank tensor field has the following form:
\begin{equation}
[A_{\alpha\beta}] = 
\left(
\begin{array}{c|c c c}
 a & b_1 & b_2 & b_3 \\
 \hline
 c_1& & & \\
 c_2 & & 
 \frac{A^{(3)}(t)}{D}[h_{\alpha\beta}] \\
 c_3 & & & 
\end{array}
\right).
\end{equation}
To obtain the restriction on the terms residing in first column and first row, we can use our result for a vector field, given in \eqref{general vector field result}. By using the foliation vector field $u^\alpha =-\delta^\alpha_0$ respecting the cosmological principle, we can obtain two different vector fields from $A_{\alpha\beta}$
\begin{align}
 v_\alpha &= A_{\alpha\beta}u^\beta = A_{\alpha 0}, \notag\\
 w_\beta &= A_{\alpha\beta}u^\alpha = A_{0\beta}.
\end{align}
Because of the product rule of Lie derivative,
\begin{align}
 \mathcal{L}_K(A_{\alpha\beta} u^\beta) &= \mathcal{L}_K(A_{\alpha\beta}) u^\beta + A_{\alpha\beta}\mathcal{L}_K({u^\beta}), \notag\\
 \mathcal{L}_K(v_\alpha) &= 0.
\end{align}
In other words, vector fields obtained by contracting dynamical fields that respect the cosmological principle also adhere to the cosmological principle. Thus, using \eqref{general vector field result}, $w_\beta = \gamma(t)u_\beta = -\gamma(t)\delta_\beta^0$ and $v_\alpha = \phi(t)u_\alpha = -\phi(t)\delta_\alpha^0$ for some scalar functions $\gamma(t)$ and $\phi(t)$. Therefore,
\begin{align}
 A_{\alpha0} &= -\phi(t)\delta^0_\alpha, \notag\\
 A_{0\beta} &= -\gamma(t)\delta^0_\beta.
\end{align}
We obtain, $A_{00} = a = -\phi(t) = -\gamma(t)$ and $A_{0i} = 0$, $A_{i0} = 0$.
\par Consequently, any dynamical second-rank tensor field $A_{\alpha\beta}(x)$ in the theory, respecting the cosmological principle, has the following form
\begin{align}
 A_{\alpha\beta}(x) &= A_{00}(t)u_{\alpha}u_\beta +\frac{A^{(3)}(t)}{D}h_{\alpha\beta}, \notag\\
 &=\left(A_{00}(t) + \frac{A^{(3)}(t)}{D}\right)u_\alpha u_\beta + \frac{A^{(3)}(t)}{D}g_{\alpha\beta},
\end{align}
where we used $h_{\alpha\beta} = g_{\alpha\beta} + u_\alpha u_\beta$, stemming from the foliation with $u^\alpha u_\alpha = -1$. Combining the terms gives
\begin{equation}
 A_{\alpha\beta}(x) = \alpha(t)u_\alpha u_\beta + \beta(t)g_{\alpha\beta}. \label{general second-rank tensor field result}
\end{equation}

\subsection{Higher Rank Tensor Fields}
\par Having obtained the preliminary results for rank-zero,rank-one and rank-two FLRW compatible tensor fields, we will utilize an inductive argument to obtain the form of higher rank FLRW compatible tensor fields.

\subsubsection{Symmetry Constraint on The Tensor}

\par Considering a rank-$N$ dynamical tensor field $T_{\mu_1...\mu_N}$, we impose it to be Lie transported along an arbitrary spatial isometry, that is
\begin{align}
 \mathcal{L}_\mathcal{K}T_{\mu_1...\mu_N} &= \mathcal{K}^\alpha(\partial_\alpha T_{\mu_1...\mu_N}) +T_{\alpha\mu_2...\mu_N}(\partial_{\mu_1}\mathcal{K}^\alpha) + T_{\mu_1\alpha\mu_3...\mu_N}(\partial_{\mu_2}\mathcal{K}^\alpha) + \ldots  + T_{\mu_1\mu_2...\mu_{N-1}\alpha}(\partial_{\mu_N}\mathcal{K}^\alpha), \notag\\[5pt]
 &= \mathcal{K}^\alpha(\partial_\alpha T_{\mu_1...\mu_N}) + \sum_{i=1}^N \left[T_{...\mu_{i-1}\alpha\mu_{i+1}...}(\partial_{\mu_i}\mathcal{K}^\alpha)\right] = 0. \label{killing equation for N rank tensor}
\end{align}
\par We begin with the fully spatial part of the tensor, i.e., $T_{a_1...a_N}$ with $a_i \in {1, \ldots, D}$. Taking the Killing vector $\mathcal{K}^k$ to be an arbitrary rotational Killing vector $\mathcal{R}^k$ at P, hence $(\mathcal{R}^k)|_P = 0$, $(\partial_j\mathcal{R}^k)|_P$ arbitrary at an arbitrary point P and $\partial_0\mathcal{R}^k = 0$ from our previous discussion, we evaluate \eqref{killing equation for N rank tensor} at P and obtain
\begin{align}
 \left.\sum_{i=1}^N \left[T_{...a_{i-1} m a_{i+1}...}(\partial_{a_i}\mathcal{R}^m)\right]\right|_P = \left.\sum_{i=1}^N \left[T_{...a_{i-1} m a_{i+1}...}\delta_{a_i}^n(\partial_{n}\mathcal{R}^m)\right]\right|_P = 0.
\end{align}
As before, in the remainder of the calculation, we omit indicating the evaluation at point P, as this is clear from the context.
Now, using the local killing equation \eqref{local killing equation}, we transfer the antisymmetry of $\partial_n\mathcal{R}^m$ to the tensor field. 

\begin{align}
 (\partial_n\mathcal{R}^m)\sum_{i=1}^N\left[ T_{\ldots a_{i-1} m a_{i+1}\ldots}(\delta_{a_i}^n) - T_{\ldots a_{i-1}}{}^n{}_{a_{i+1}\ldots}(h_{a_im})\right] = 0. \label{N rank tensr analysis step 1}
\end{align}
In \eqref{N rank tensr analysis step 1}, $\partial_n\mathcal{R}^m$ is arbitrary at P, yielding,
\begin{align}
 \sum_{i=1}^N\left[ T_{\ldots a_{i-1} m a_{i+1}\ldots}(\delta_{a_i}^n) - T_{\ldots a_{i-1}}{}^n{}_{a_{i+1}\ldots}(h_{a_im})\right] = 0. \label{N rank tensr analysis step 2}
\end{align}
We separate the $i=j$ for a fixed $j \in \{1, \ldots, N\}$:
\begin{align}
 &\sum_{\substack{i = 1 \\ i \ne j}}^N\left[ T_{\ldots a_{i-1} m a_{i+1}\ldots}(\delta_{a_i}^n) - T_{\ldots a_{i-1}}{}^n{}_{a_{i+1}\ldots}(h_{a_im})\right]+T_{\ldots a_{j-1}ma_{j+1}\ldots}(\delta^n_{a_j}) - T_{\ldots a_{j-1}}{}^n{}_{a_{j+1}\ldots}h_{a_jm} = 0.
\end{align}
Contracting the indices $n$ and $a_j$ gives,
\begin{align}
 &\sum_{\substack{i = 1 \\ i \ne j}}^N\left[ T_{\ldots a_{i-1} m a_{i+1}\ldots a_{j-1}a_ia_{j+1}\ldots} - T_{\ldots a_{i-1}}{}^n{}_{a_{i+1}\ldots a_{j-1}na_{j+1}\ldots}(h_{a_im})\right] +T_{\ldots a_{j-1}ma_{j+1}\ldots}(D) - T_{\ldots a_{j-1}}{}_m{}_{a_{j+1}\ldots} = 0. \label{N rank tensr analysis step 3}
\end{align}
Letting $m:=a_j$ and defining the following notations:
\begin{align}
 T_{\ldots a_{i-1}}{}^n{}_{a_{i+1}\ldots a_{j-1}na_{j+1}\ldots} &:=\mathrm{Tr}_{(i, j)}(T_{a_1 \ldots a_N}), \notag\\[5pt]
 T_{\ldots a_{i-1} a_j a_{i+1}\ldots a_{j-1}a_ia_{j+1}\ldots} &:= (i, j)\cdot T_{a_1 \ldots a_N},
\end{align}
where the action of $\mathrm{Tr}_{(i, j)}$ contracts the indices of $a_i$ and $a_j$ lowering the rank by 2, and $(i, j)\cdot$ swaps the indices $a_i$ and $a_j$ corresponding to a transposition of the indices. Then, \eqref{N rank tensr analysis step 3} becomes 
\begin{align}
 T_{a_1 \ldots a_N}(D - 1) + \sum_{\substack{i = 1 \\ i \ne j}}^N\left[ (i, j)\cdot T_{a_1\ldots a_N}\right] - \sum_{\substack{i = 1 \\ i \ne j}}^N\left[ \mathrm{Tr}_{(i, j)}(T_{a_1 \ldots a_N}) h_{a_ia_j} \right] = 0. \label{N rank tensr analysis - specific j step 4}
\end{align}

\par To use all the symmetry available in this construction, we evaluate \eqref{N rank tensr analysis - specific j step 4} for all $j \in 1\ldots N$ and add them up to get the following:
\begin{align}
 T_{a_1 \ldots a_N}(D - 1)N + \sum_{\substack{i \ne j}}^N\left[ (i, j)\cdot T_{a_1\ldots a_N}\right] - \sum_{\substack{ i \ne j}}^N\left[ \mathrm{Tr}_{(i, j)}(T_{a_1 \ldots a_N}) h_{a_ia_j} \right] = 0, \label{N rank tensr analysis - summed j step 5}
\end{align}
where the summations now run over all $i$ and $j$ with $i \neq j$.
Because the second and third terms are symmetric under the exchange of labels $i$ and $j$, \eqref{N rank tensr analysis - summed j step 5} can be also written as,
\begin{align}
 T_{a_1 \ldots a_N}(D - 1)N + 2\sum_{\substack{i < j}}^N\left[ (i, j)\cdot T_{a_1\ldots a_N}\right] - 2\sum_{\substack{ i < j}}^N\left[ \mathrm{Tr}_{(i, j)}(T_{a_1 \ldots a_N}) h_{a_ia_j} \right] = 0. \label{N rank tensr analysis - summed j step 6}
\end{align}
\subsubsection{Tools From Representation Theory}
\par To be able to proceed, we will make use of the tools of representation theory. More specifically, we will consider the collective action of $GL(V)$ (the general linear group of the vector space V consisting of all the linear invertible transformations from V to V) and $S_N$ (the symmetric group of N elements, i.e., the group of all possible permutations of N elements which we assume to be finite) on $W = V^{\otimes N}$ (the vector space of rank-$N$ tensors), with the constituent vector space $V$ defined over the field $\mathbb{F}$, e.g., $\mathbb{C}$ or $\mathbb{R}$, which is to be specified later. As a side note, in the terminology of representation theory, $W$ is referred as a $\mathbb{F}\left(S_N \times GL(V)\right)$ module, in other words, the vector space over the field $\mathbb{F}$ on which the representations of the group elements of $GL(V)$ and $S_N$ act accordingly (see Chapter~4 in \cite{James-Gordon}).

\par In particular, for a given basis $\{\hat{e}_\mu\}$ for the vector space $V$, $T \in W = V^{\otimes N}$ is given by 
$T = T^{\mu_1 \ldots \mu_N} \left(\hat{e}_{\mu_1} \otimes \ldots \otimes \hat{e}_{\mu_N}\right)$. Then, the action of $\sigma \in S_N$ on $T$ permutes the indices of the basis vectors:
\begin{align}
 \sigma \cdot T &= T^{\mu_1 \ldots \mu_N} \left(\hat{e}_{\mu_{\sigma(1)}} \otimes \ldots \otimes \hat{e}_{\mu_{\sigma(N)}}\right)=T^{\mu_{\sigma^{-1}(1)} \ldots \mu_{\sigma^{-1}(N)}}\left(\hat{e}_{\mu_1} \otimes \ldots \otimes \hat{e}_{\mu_N}\right), \label{Action of S_N on T}
\end{align}
where the second equality follows by defining new dummy index labels as $\sigma(i) = j, j = \sigma^{-1}(i)$, thus $\mu_i \xrightarrow{} \mu_{\sigma^{-1}(j)}$ in the implicit summation (Einstein's summation convention). On the other hand, $g \in GL(V)$ acts on each $\hat{e}_{\mu_i}$, that is:
\begin{align}
 g \cdot T = T^{\mu_1 \ldots \mu_N} \left(g(\hat{e}_{\mu_1}) \otimes \ldots \otimes g(\hat{e}_{\mu_N})\right).
\end{align}
In fact, these two actions commute and the vector space $W = V^{\otimes N}$ decomposes as follows:
\begin{equation}
 V^{\otimes N} \cong \bigoplus_{\lambda \vdash N}(S_\lambda V)^{\oplus m_\lambda}, \label{Schur--Weyl form 1}
\end{equation}
which is called the Schur--Weyl Duality (see p.~77 in \cite{Fulton-Harris}). \par Here, $\lambda = (\lambda_1, \ldots, \lambda_k)$ is a partition of N, i.e. $\sum_i \lambda_i = N$ denoted by $\lambda \vdash N$, and corresponds to a Young tableau of shape $\lambda$ ($\lambda - tableau$) characterizing the irreducible representations of the symmetric group $S_N$ (see Chapter 2.1 of \cite{Bruce}).

$S_\lambda{V}$ is the image of the action of $\hat{\alpha}_\lambda$, Young symmetrizer on $V^{\otimes N}$ ($\hat{\alpha}_\lambda$ is an element in $\mathbb{F}S_N = \left\{ \sum_{\sigma \in S_N} c_\sigma \cdot \sigma \,\middle|\, c_\sigma \in \mathbb{F}, \sigma \in S_N \right\}$ whose action on a tensor $T \in V^{\otimes N}$ projects $T$ into $T^{(\lambda)}$ living in the irreducible module of $GL(V)$, named the Weyl module corresponding to $\lambda$.) (see pp.~76--78 in \cite{Fulton-Harris}). Essentially, a given Weyl module is a subspace of $V^{\otimes N}$ consisting of tensors respecting the symmetries of the corresponding $\lambda - tableau$.
\par On the other hand, $m_\lambda$ is the dimension of the irreducible module of $S_N$ corresponding to the partition $\lambda$, called Specht module and denoted by $S^\lambda$ (see Chapter~2.3 in \cite{Bruce}).
\par While the Schur--Weyl Duality demonstrates the duality between the representations of $GL(V)$ and $S_N$, the form in \eqref{Schur--Weyl form 1} underlies the action of $GL(V)$, whereas the following isomorphic decomposition signifies the action of both $GL(V)$ and $S_N$; more useful for our case (see Corollary~4.59 in \cite{Pavel} and Schur--Weyl Duality~3.4 in \cite{Stevens}).
\begin{align}
 V^{\otimes N} \cong \bigoplus_{\lambda \vdash N}S^{\lambda}\otimes(S_\lambda V), \label{Schur--Weyl form 2}
\end{align}
with $\dim(S^\lambda) = m_\lambda$.
\par Regarding the field $\mathbb{F}$ that the vector space V is defined over, the construction of the Specht and Weyl modules to build the Schur--Weyl Duality is generally carried out over $\mathbb{F} = \mathbb{C}$ which is the case in \cite{Fulton-Harris}, \cite{Bruce}, \cite{Stevens}. However, this is not a requirement. In fact many of the properties of the irreducible modules considered here are independent of the field $\mathbb{F}$. Firstly, Specht modules of $S_N$ are always irreducible when the field $\mathbb{F}$ has characteristic zero but may not be when the field has a prime characteristic (see Chapter~4 in \cite{GD-James}, ~Chapter~7.1 in \cite{James-Kerber}, ~Chapter~2.4 in \cite{Bruce}) 
The same is true for the irreducibility of Weyl modules of $GL(V)$ (see ~Chapter~8.3 in \cite{James-Kerber}). Furthermore, the Schur--Weyl Duality is still valid when the field has characteristic zero (see Chapters~4.18 and~4.19 in \cite{Pavel}), and even when it has a positive characteristic~\cite{Doty}. 
\par Consequently, the decomposition in \eqref{Schur--Weyl form 2} is valid over any field $\mathbb{F}$ with characteristic zero, where $S^\lambda$ and $S_\lambda(V)$ are the Specht and Weyl modules respectively as defined previously. This point is noteworthy and necessary because in general relativity the tensor algebra is constructed over the field of real numbers, i.e. $\mathbb{F} = \mathbb{R}$. So, from now on, we restrict $\mathbb{F}$ to be a field of characteristic zero.

\par With the decomposition in \eqref{Schur--Weyl form 2}, a given tensor $T \in V^{\otimes N}$ can be expressed as:
\begin{align}
 T = \sum\limits_{\lambda \vdash N}T^{(\lambda)} = \sum\limits_{\lambda \vdash N}\left(\sum\limits_ka_k^{(\lambda)}\otimes b_k^{(\lambda)}\right), \label{Tensor decomposition by Schur--Weyl}
\end{align}
for some $a_k^{(\lambda)} \in S^\lambda$ and $b_k^{(\lambda)} \in S_\lambda(V)$ according to the very definition of direct sum and tensor product. As stated earlier, $T^{(\lambda)}$ is the constituent of the full tensor $T$ living inside the irreducible module corresponding to the partition $\lambda$, under the action of $GL(V)$ and $S_N$. Furthermore, each $T^{(\lambda)}$ are distinct and independent, thus the expression of $T$ with this decomposition is unique. This is also a consequence of the definition of direct sum (see pp.~41--43 in \cite{AdvancedAlgebra}).
\par To understand the second term in equation \eqref{N rank tensr analysis - summed j step 6}, we look at how the element
\begin{equation}
 \tau := \sum\limits_{i<j}(i,j) = \sum\limits_{\sigma \in C_2}\sigma, \label{Tau operator}
\end{equation}
in $\mathbb{F}S_N$ acts on $T$, or on a irreducible constituent $T^{(\lambda)}$. ($C_2$ is the conjugacy class of transpositions, consisting of all $\binom{N}{2}$ transpositions; since $\tau$ is essentially the sum of all possible transpositions in $S_N$). Firstly, for any $\pi \in S_N$, the following holds
\begin{align}
 \pi\tau\pi^{-1} &= \sum\limits_{i<j}\pi(i,j)\pi^{-1} =\sum\limits_{i<j}(\pi(i),\pi(j)) = \tau,
\end{align}
because the last summation still runs over all possible transpositions in $S_N$. Therefore, $\tau$ commutes with any $\pi \in S_N$; thus living in the center of $S_N$, $Z(S_N)$.

\begin{definition}
Given the G-modules V and W (vector spaces that the representations of $g \in G$ act on), G-homomorphism is a linear transformation from $V$ to $W$ respecting the action of G. That is,
\begin{align}
 &\theta: V \xrightarrow{} W,\notag\\
 &\theta(g.v) = g\theta(v),
\end{align}
with $v \in V$ and $\theta(v) \in W$.
\end{definition}

\begin{definition}
 Given the $\mathbb{F}G$-modules V and W,
 \begin{equation}
 Hom_G(V,W) = \{\theta:V \xrightarrow{} W \hspace{0.1cm}| \hspace{0.1cm} \theta(gv) = g\theta(v)\},
\end{equation}
 is the set of all G-homomorphims from V to W.
 \begin{equation}
 End_G(V) = Hom_G{(V,V)},
\end{equation}
 is the set of all G-homomorphisms from a G-module V to itself.
\end{definition}

\par In fact, for $\rho, \theta \in Hom_G(V,W)$, $v \in V$ and $\lambda \in \mathbb{F}$ we define addition and multiplication on $Hom_G(V,W)$ as follows: 
\begin{align}
 (\rho + \theta)(v) &= \rho(v) + \theta(V), \notag\\
 \lambda\theta(v) &= \theta(\lambda v).
\end{align}
This makes $Hom_G(V,W)$ a vector space over $\mathbb{F}$.

\par In our case, $\tau$ in \eqref{Tau operator}, with the group action of $S_N$ defined as in \eqref{Action of S_N on T}, is an $S_N$-homomorphism. Furthermore, since we are considering the $S_N$ homomorphism $\tau$ composed of the elements $\sigma \in S_N$, its action on $v \in V$ an $S_N$-module, returns an element in the same module V. Therefore, $\tau$ is a $S_N$-homomorphism in $End_{S_N}(V)$ for any $S_N$-module V.
\par Now, with the decomposition established in \eqref{Tensor decomposition by Schur--Weyl}, we can focus on the Specht modules $S^\lambda$, the irreducible modules of $S_N$, and utilize Schur's Lemma in our analysis.
\vspace{0.5cm}

\begin{schurs_lemma}
Given the irreducible $\mathbb{F}G$-modules V and W, if $\theta:V \xrightarrow{} W$ is a G-homomorphism, then it is either an isomorphism or the zero map.
\end{schurs_lemma}

\begin{corollary} (Schur's Lemma for algebraically closed fields) Let V be an irreducible G-module over an algebraically closed field $\mathbb{F}$, and $\phi: V \xrightarrow{} V$ be a G-homomorphism (G-endomorphism). Then, $\phi = \lambda \hat{1}$ where $\lambda$ is some scalar in $\mathbb{F}$ and $\hat{1}$ is the identity operator in GL(V) 
\textnormal{(Chapter~1.3 in \cite{Pavel})}\label{Schur corollary}.
\end{corollary}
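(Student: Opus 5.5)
We prove the Corollary (Schur's Lemma for algebraically closed fields) directly from Schur's Lemma as stated just above, together with the existence of eigenvalues over an algebraically closed field. Throughout, $V$ is assumed finite dimensional, as are all modules arising in our application (Specht modules and Weyl modules of a finite-dimensional $V$). The plan has three short steps: produce an eigenvalue, form the shifted endomorphism, and apply Schur's Lemma to it.

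\textbf{Step 1: an eigenvalue.} Since $V$ is irreducible it is nonzero. Hence the characteristic polynomial $\det(\phi - x\hat{1})$ is a nonconstant polynomial over $\mathbb{F}$. Because $\mathbb{F}$ is algebraically closed, it has a root $\lambda \in \mathbb{F}$. Thus $\phi - \lambda\hat{1}$ is not invertible and has a nonzero kernel.

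\textbf{Step 2: the shifted map is a $G$-homomorphism.} Set $\psi := \phi - \lambda\hat{1}$. The identity map commutes with the action of $G$, and $\mathrm{End}_G(V)$ is a vector space over $\mathbb{F}$, as noted above for $\mathrm{Hom}_G(V,W)$. Therefore
\begin{equation}
\psi(g v) = \phi(g v) - \lambda g v = g\phi(v) - g(\lambda v) = g\psi(v),
\end{equation}
so $\psi \in \mathrm{End}_G(V)$.

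\textbf{Step 3: apply Schur's Lemma.} By Schur's Lemma applied with $W = V$, the map $\psi$ is either an isomorphism or the zero map. Step 1 shows $\psi$ has a nonzero kernel, so it is not an isomorphism. Hence $\psi = 0$, that is, $\phi = \lambda\hat{1}$.

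The only nontrivial point is Step 1, which uses algebraic closure and finite dimensionality to guarantee an eigenvalue; everything else follows formally from Schur's Lemma. For the paper's later use with $\mathbb{F} = \mathbb{R}$, this step is where care is needed. One can work over $\mathbb{C}$ and use the fact that Specht modules are absolutely irreducible. Alternatively, for the central element $\tau$ one can note directly that $\tau$ acts on $S^\lambda$ by a rational scalar, namely the content sum of $\lambda$, computed via Jucys--Murphy elements. Either route allows the conclusion to be applied over characteristic-zero fields such as $\mathbb{R}$.
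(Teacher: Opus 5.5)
Your proof is correct and is the paper's own argument: take an eigenvalue $\lambda\in\mathbb{F}$ of $\phi$, which exists by algebraic closure, observe that $\phi-\lambda\hat{1}$ is a $G$-endomorphism that is not an isomorphism, and conclude from Schur's Lemma that it is zero. You also state explicitly the finite-dimensionality of $V$ that the paper's determinant argument uses tacitly, and your closing remark on the real case (absolute irreducibility of Specht modules, or the content-sum eigenvalue of $\tau$ via Jucys--Murphy elements) is a valid alternative to the paper's later use of the self-adjointness of $\tau$ to obtain a real eigenvalue.
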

\par The proof of the corollary follows from the fact that $\phi - \lambda \hat{1}$ is also a G-endomorphism from V to itself, and has zero determinant (not an isomorphism) assuming that $\lambda$ is an eigenvalue of the operator $\phi$ lying in the field $\mathbb{F}$ (thus the requirement of algebraic closedness). 
\par Before moving on, we also define the inner product in the space $W = V^{\otimes N}$. From now on, we specifically focus on the fields $\mathbb{R}$ and $\mathbb{C}$ as our underlying field $\mathbb{F}$. Let $\langle, \rangle_0$ be any positive definite inner product (dot product for $\mathbb{R}$ and Hermitian product for $\mathbb{C}$) and $\{\hat{e}_\mu\}$ be an orthonormal basis for V with $\langle \hat{e_i}, \hat{e_j} \rangle_0 = \delta_{ij}$. Then, the inner product $\langle, \rangle$ in $W$ defined as follows,
\begin{align}
 \langle \left(\hat{e}_{\mu_1} \otimes \ldots \otimes \hat{e}_{\mu_N}\right), \left(\hat{e}_{\nu_1} \otimes \ldots \otimes \hat{e}_{\nu_N}\right)\rangle = \prod_{i = 1}^N \langle\hat{e}_{\mu_i}, \hat{e}_{\nu_i} \rangle_0 = \prod_{i = 1}^N \delta_{\mu_i \nu_i},
\end{align}
is invariant under the action of $S_N$, according to the definition in \eqref{Action of S_N on T}. That is,
\begin{align}
 \langle \left(\hat{e}_{\mu_{\sigma(1)}} \otimes \ldots \otimes \hat{e}_{\mu_{\sigma(N)}}\right), \left(\hat{e}_{\nu_{\sigma(1)}} \otimes \ldots \otimes \hat{e}_{\nu_{\sigma(N)}}\right)\rangle &= \prod_{i = 1}^N \delta_{\mu_{\sigma(i)} \nu_{\sigma(i)}} = \prod_{j = 1}^N \delta_{\mu_j \nu_j} \notag\\
 &=\langle \left(\hat{e}_{\mu_1} \otimes \ldots \otimes \hat{e}_{\mu_N}\right), \left(\hat{e}_{\nu_1} \otimes \ldots \otimes \hat{e}_{\nu_N}\right)\rangle. \label{S_N invariant inner product}
\end{align}
Then, $\sigma^\dagger \cdot \sigma = \hat{1}$; hence any element $\pi \in S_N$ with $\pi^{-1} = \pi$ is self-adjoint, and so is any real linear combination of them in $\mathbb{R}S_N$. Fortunately, our operator $\tau$ is of this kind due to $(i,j)^{-1} = (i,j)$ for $(i,j) \in C_2$. Consequently, $\tau$ has real eigenvalues as an operator acting on $T \in W$. With this, proof of the corollary \eqref{Schur corollary} can be manipulated and applied to $\tau$ to yield: $\tau = c_{\lambda}\hat{1}_d$ on the irreducible Specht module $S^\lambda$, with $c_\lambda \in \mathbb{R}$ and $d = \dim(S^\lambda)$, i.e. it acts as a real scalar on the irreducible $S_N$ modules over $\mathbb{F} = \mathbb{C}$ or $\mathbb{R}$. 
 
\par So, the action of $\tau$ on $T^{\lambda}$ in the decomposition \eqref{Tensor decomposition by Schur--Weyl} is
\begin{align}
 \tau \cdot T^{(\lambda)} = c_\lambda T^{(\lambda)}, \quad c_\lambda \in \mathbb{R}, \label{Tau action on T_lambda}
\end{align}
as it acts on the Specht module part of the decomposition in \eqref{Schur--Weyl form 1}, without affecting the corresponding Weyl module. To calculate $c_\lambda$, we take the trace on both sides of $\tau = c_\lambda \hat{1}_d$ for the module corresponding to $\lambda$. Note that trace of a representation is called its character and denoted by $\chi$; and it solely depends on the conjugacy class that the group element belongs to and the irreducible module it acts on, so we will label it by $\lambda$ again to denote the module (see Chapter~1.8 in \cite{Bruce}),
\begin{align}
 \sum\limits_{\sigma \in C_2}\chi^{(\lambda)}(\sigma) = c_\lambda \dim(S^\lambda). \label{c_lambda eqn 1}
\end{align}
Since all the transpositions live in the same conjugacy class $C_2$, they have the same character on this module. Denoting $\chi^{(\lambda)}(\sigma) = \chi^{(\lambda)}(C_2)$ for all $\sigma \in C_2$ and noting that the size of $C_2$ is $\binom{N}{2}$, \eqref{c_lambda eqn 1} yields
\begin{align}
 c_\lambda = \binom{N}{2}\frac{\chi^{(\lambda)}{(C_2)}}{\dim(S^\lambda)}.
\end{align}
In the further analysis, what will be important is not the exact value of $c_\lambda$ but an upper bound. Luckily, there is a nice relation specifically for the case of transpositions given in Theorem~3.5 of \cite{upper_bound} as follows
\begin{align}
 \left|\frac{\chi^{(\lambda)}{(C_2)}}{\dim(S^\lambda)}\right| \leq \max\left\{\frac{\lambda_r}{N}, \frac{\lambda_c}{N}\right\},
\end{align}

\noindent where $\lambda_r$ is the length of the first row and $\lambda_c$ is the length of the first column in the corresponding Young tableau of shape $\lambda$. That is, for the partition $\lambda = (\lambda_1, \ldots, \lambda_k)$ with $\sum_{i=1}^k\lambda_i = N$, $\lambda_r = \lambda_1$ and $\lambda_c = k$. Therefore, it is maximized for the fully symmetric partition $\lambda = (N)$ with $\lambda_r = N, \lambda_c = 1$ and for the fully antisymmetric partition $\lambda = (1^N)$ with $\lambda_r = 1, \lambda_c = N$, yielding the value 1. So, for any $\lambda$ partitioning N
\begin{equation}
 |c_\lambda| \leq \binom{N}{2}. \label{bound on c_lambda}
\end{equation}
\par Based on these results, the constraint equation on any rank-N FLRW-compatible tensor, i.e. equation \eqref{N rank tensr analysis - summed j step 6}, becomes
\begin{align}
 \sum\limits_{\lambda \vdash N}\left[\left((D-1)N + 2c_\lambda\right)T^{(\lambda)}\right] - 2\sum_{\substack{ i < j}}^N\left[ \mathrm{Tr}_{(i, j)}(T_{a_1 \ldots a_N}) h_{a_ia_j} \right] = 0, \label{main constraint equation}
\end{align}
subject to \eqref{bound on c_lambda}.
\subsubsection{Construction of a Strong Induction}\label{Section: Construction of Strong Induction}
To obtain an expression for $T_{\mu_1 \ldots \mu_N}$ satisfying \eqref{main constraint equation} we will utilize a strong inductive argument. For this, initially we build on the Definition 3 in \cite{metin}.
\begin{definition}\label{definition:g-u-monomial}
The \textbf{$g$-$u$ monomial of rank $(N,r, \sigma)$} is given by
\begin{equation}
 M^{(r, \sigma)}_{\mu_1 \ldots \mu_N} = \sigma \cdot \left(g_{\mu_1\mu_2} \otimes \ldots \otimes g_{\mu_{2r-1}\mu_{2r}} \otimes u_{\mu_{2r+1}}\otimes \ldots \otimes u_N\right),
\end{equation}
which is an N-rank tensor composed of $r$ metric tensors and $N-2r$ timelike unit vectors $u^\alpha = \delta^\alpha_0$ with $N$ and $r$ nonnegative integers, $N\geq 2r$ and $\sigma \in S_N$ permuting the indices. 
\par Also, $M^{(r)}_{\mu_1 \ldots \mu_N} = M^{(r, \id)}_{\mu_1 \ldots \mu_N}$ where $\id \in S_N$ is the identity element.
\end{definition}

\begin{definition}
 Let $\mathcal{M}_{(N,r)}$ be the orbit of $M^{(r)}_{\mu_1 \ldots \mu_N}$, consisting of all the distinct $g$-$u$ monomials of rank $(N,r)$,
 \begin{equation}
 \mathcal{M}_{(N,r)} =\{M^{(r, \sigma)}_{\mu_1 \ldots \mu_N}: \sigma \in S_{(N,r)}\}, \label{maximal-distinct-set-of-g-u-monomials}
\end{equation}
 and $Stab_{(N,r)}$ be the stabilizer of $M^{(r)}_{\mu_1 \ldots \mu_N}$,
 \begin{equation}
 \Stab_{(N,r)} = \{\sigma \in S_N : M^{(r, \sigma)}_{\mu_1 \ldots \mu_N} = M^{(r)}_{\mu_1 \ldots \mu_N}\}.
\end{equation}
Finally, let $S_{(N,r)}$ be a set of representatives for the left cosets of $Stab_{(N,r)}$. It corresponds to the smallest subgroup of $S_N$ that can generate any monomial of rank $(N,r)$ from $M^{(r)}_{\mu_1 \ldots \mu_N}$.
\end{definition}
\noindent As a remark, the number of possible distinct \textit{g-u} monomials of rank $(N,r, \sigma)$, $N$ and $r$ fixed, is the size of this set, and given by the Orbit-Stabilizer Theorem as follows:
\begin{equation}
 |\mathcal{M}_{(N,r)}| = |S_{(N,r)}|= \frac{N!}{2^rr!(N-2r)!}, \label{number of distinct (N,r) monomial formula}.
\end{equation}
For instance, $\mathcal{M}_{(4,2)} = \{g_{\mu_1\mu_2}g_{\mu_3\mu_4},g_{\mu_1\mu_3}g_{\mu_2\mu_4}, g_{\mu_1\mu_4}g_{\mu_2\mu_3} \}$ with $|\mathcal{M}_{(4,2)}| = 4!/(2^2.2) = 3$.

\par Then, we state the main result of this paper rigorously as follows.
\begin{theorem}\label{theorem 1}\textnormal{\textbf{(Monomial Decomposition)}}
 In $(D+1)$-dimensional FLRW spacetime equipped with Levi-Civita connection, any N-rank FLRW compatible tensor with $D>N$ admits the decomposition:
 \begin{equation}
 T_{\mu_1 \ldots \mu_N} =\sum\limits_{r}\sum\limits_{\sigma \in S_{(N,r)}}\alpha_{\sigma r}\hspace{0.1cm} M^{(r,\sigma)}_{\mu_1 \ldots \mu_N}, \label{main theorem expansion}
\end{equation}
 for some $\alpha_{\sigma r}(t) \in \mathbb{F}$, the underlying field of the vector space ($\mathbb{C}$ or $\mathbb{R}$). 
\end{theorem}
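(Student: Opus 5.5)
The proof is a strong induction on the rank $N$. The base cases are the scalar result $\phi=\phi(t)$, the vector result \eqref{general vector field result} and the rank-two result \eqref{general second-rank tensor field result}. Assume the decomposition \eqref{main theorem expansion} holds for every FLRW-compatible tensor of rank $k<N$ (for $D>k$, which follows from $D>N$). Fix a rank-$N$ FLRW-compatible tensor $T_{\mu_1\ldots\mu_N}$ and split its components according to how many indices are temporal.

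\textbf{Components with temporal indices.} Any component with at least one index equal to $0$ comes from a contraction with $u^\alpha$. For example, $T_{\mu_1\ldots\mu_{N-1}\alpha}u^\alpha$ is FLRW-compatible by Property \ref{FLRW Compatible statement 1}, since $u^\alpha$ is compatible. Its rank is $N-1$, so by the induction hypothesis it is a combination of $g$-$u$ monomials with coefficients depending only on $t$. Contracting in each slot in turn determines every mixed component of $T$. Reassembling the pieces and converting $h_{\mu\nu}=g_{\mu\nu}+u_\mu u_\nu$ then gives monomials in $g$ and $u$. Consistency between different slots holds automatically, because every expression describes the same tensor.

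\textbf{Purely spatial block.} The spatial block $T_{a_1\ldots a_N}$ must be handled with the constraint \eqref{main constraint equation}. Each trace $\mathrm{Tr}_{(i,j)}T$ is a contraction with $h^{ab}$, hence FLRW-compatible of rank $N-2$. By the induction hypothesis, restricted to spatial indices, it is a combination of products of $h$ with coefficients depending only on $t$. The trace term in \eqref{main constraint equation} is therefore a known isotropic tensor $S_{a_1\ldots a_N}$ built from products of $h$. Projecting \eqref{main constraint equation} onto each Schur--Weyl component gives
\begin{equation}
\left((D-1)N+2c_\lambda\right)T^{(\lambda)}=S^{(\lambda)}.
\end{equation}
By \eqref{bound on c_lambda}, $(D-1)N+2c_\lambda\ge (D-1)N-N(N-1)=N(D-N)>0$ when $D>N$. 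Each $T^{(\lambda)}$ is therefore uniquely determined as $S^{(\lambda)}$ divided by a nonzero constant. Since $S^{(\lambda)}$ is obtained from $S$ by applying an element of $\mathbb{F}S_N$, and $S_N$ maps $h$-monomials to $h$-monomials, $T$ is a combination of $h$-monomials. This is where the hypothesis $D>N$ enters.

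\textbf{Time dependence of the coefficients.} The coefficients in $S$ are functions of $t$ alone by the induction hypothesis, and the division is by constants, so all coefficients in the spatial block depend only on $t$. Homogeneity extends the pointwise statement at $P$ to the whole spacetime, as in the lower-rank cases. Finally, expanding $h=g+uu$ and grouping terms gives the form \eqref{main theorem expansion}.

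\textbf{Main obstacle.} I expect the hard step to be making the projection argument rigorous. One must confirm that the trace term really is expressible from lower-rank compatible data, and that the $T^{(\lambda)}$ obtained from the $\lambda$-components of the equation recombine into monomials. The cleaner route is to avoid the $\lambda$-projection altogether. The operator $X\mapsto (D-1)NX+2\tau X$ is invertible on $V^{\otimes N}$ because all its eigenvalues are nonzero, and it commutes with $S_N$. It therefore preserves the finite-dimensional, $S_N$-stable span of $h$-monomials, and its inverse maps $S$ into that span. I would present the argument in this form, so that Schur's lemma is needed only through the eigenvalue bound.
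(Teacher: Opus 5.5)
Your proposal is correct and follows the paper's proof. Both use strong induction, obtain the mixed components by contracting with $u^\alpha$, and handle the fully spatial block through the summed constraint \eqref{main constraint equation}, where $|c_\lambda|\le\binom{N}{2}$ makes $(D-1)N+2c_\lambda$ nonzero for $D>N$.

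Your variant of inverting $(D-1)N+2\tau$ directly on the span of $h$-monomials is a tidier packaging of the same step, but the justification needs a small fix. That operator preserves the span because it is itself an element of $\mathbb{F}S_N$ ($\tau$ is a sum of transpositions), not merely because it commutes with $S_N$. The $GL(V)$ action also commutes with $S_N$ and does not preserve that span.
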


\par It is important to emphasize that the restriction $D>N$ is not merely technical. For the connected isotropy group $SO(D)$, the spatial Levi--Civita tensor $\epsilon_{a_1\cdots a_D}$ is an invariant tensor of rank $D$. Hence, for $N\geq D$, additional parity-odd invariant structures may enter the decomposition. The assumption $D>N$ excludes these Levi--Civita-type structures and guarantees that the invariant covariant tensors are generated only by the spatial metric, equivalently by $g_{\mu\nu}$ and $u_\mu$. If invariance under the full orthogonal group $O(D)$ is imposed, parity-odd structures are excluded by reflections; nevertheless, the condition $D>N$ is kept here to make the statement independent of this distinction.
\par To prove Theorem \eqref{theorem 1} for any N (finite), we assume that it holds for any k-rank FLRW compatible tensor with $D>k$ for $k=1, 2, \ldots, N-1$. 
\par We begin by the fully spatial part and utilize the constraint equation \eqref{main constraint equation}. $\mathrm{Tr}_{(i, j)}(T_{a_1\ldots a_N})$ is a rank N-2 FLRW compatible tensor by the property \ref{FLRW Compatible statement 1}. Then, by the assumption, it is expressed as
\begin{equation}
 \mathrm{Tr}_{(i, j)}(T_{a_1\ldots a_N}) = \sum\limits_{r}\sum\limits_{\sigma \in S_{(N-2,r)}}\beta_{\sigma r}\hspace{0.1cm}\sigma \cdot M^{(r)}_{a_1\ldots a_{N}\setminus\{a_i,a_j\}}, \label{Trace term as monomial}
\end{equation}
where $\beta_{\sigma r}(t)\in \mathbb{F}$ and $a_1\ldots a_{N}\setminus\{a_i,a_j\}$ denotes $a_1\ldots a_{i-1}a_{i+1}\ldots a_{j-1}a_{j+1}\ldots a_{N}$.
\par If N is odd, $r \neq (N-2)/2$, then $M^{(r)}_{a_1\ldots a_{N}\setminus\{a_i,a_j\}}$ will have at least one $u_{a_i}$, for any r, in the expression. However, $u_{a_i} = \delta_{a_i}^0 = 0$, hence $\mathrm{Tr}_{(i, j)}(T_{a_1\ldots a_N}) = 0$. So, equation \eqref{main constraint equation} becomes:
\begin{equation}
 \sum\limits_{\lambda \vdash N}\left[\left((D-1)N + 2c_\lambda\right)T^{(\lambda)}\right] = 0.
\end{equation}
Because $T^{(\lambda)}$ are linearly independent according to the direct sum decomposition in \eqref{Schur--Weyl form 2}, it yields, for any $\lambda$:
\begin{equation}
 \left((D-1)N + 2c_\lambda\right)T^{(\lambda)} = 0.
\end{equation}
By \eqref{bound on c_lambda}, $\min\{{c_\lambda\} = -\binom{N}{2}}$, thus the coefficient is guaranteed to be non-zero by requiring $D>N$. Consequently, $T^{(\lambda)} = 0$ for any $\lambda$. Therefore, the fully spatial part of any odd rank FLRW-compatible tensor vanishes, provided that $D>N$. 
\par If N is even, all the terms except $r = (N-2)/2$ in \eqref{Trace term as monomial} vanish similarly. So,
\begin{equation}
 \mathrm{Tr}_{(i, j)}(T_{a_1\ldots a_N}) = \sum\limits_{\sigma \in S_{N-2}}\beta_{\sigma}\hspace{0.1cm}\sigma \cdot \left(h^{\otimes (N/2 -1)}\right)_{a_1 \ldots a_{N}\setminus\{a_i,a_j\}}, \label{exp with repeated monomials}
\end{equation}
for some $\beta_\sigma(t) \in \mathbb{F}$ and $(w\otimes v)_{ij} = w_i \otimes v_j $. Here, we are not using $S_{(N-2,(N-2)/2)}$ for our choice of $\sigma$ because the expressions become complicated and the repeated monomials in the expansion \eqref{exp with repeated monomials} can be always combined and the corresponding coefficients $\beta_{\sigma}(t)$ can be arranged accordingly.
Defining $\text{Fix}_{ij}(S_N) = \{\sigma\in S_N : \sigma(i) = i, \sigma(j) = j\}$, the second term in equation \eqref{main constraint equation} can be written as:
\begin{align}
 \sum_{\substack{ i < j}}^N\left[ \mathrm{Tr}_{(i, j)}(T_{a_1 \ldots a_N}) h_{a_ia_j} \right] &= \sum\limits_{i<j}\sum\limits_{\sigma \in S_{N-2}}\beta_{\sigma}\hspace{0.1cm}\sigma \cdot \left(h^{\otimes (N/2 -1)}\right)_{a_1 \ldots a_{N}\setminus\{a_i,a_j\}}h_{a_ia_j} \notag\\
 &= \sum_{\substack{ i < j}}\sum\limits_{\pi \in \text{Fix}_{ij}(S_N)}\beta_\pi\hspace{0.1cm} \pi \cdot (h^{\otimes (N/2)})_{a_1 \ldots a_{N}\setminus\{a_i,a_j\}a_ia_j}, \label{2nd term in main constrain eqn _1}
\end{align}
for some $\beta_\pi(t) \in \mathbb{F}$. Now, we can also decompose the final tensor according to \eqref{Tensor decomposition by Schur--Weyl}. As stated in \eqref{Schur--Weyl form 1}, we can express the tensor completely by its projected components into $S_\lambda V$, which is given by the action of $\hat{\alpha}_{\lambda}$, Young symmetrizer, on the tensor with $\hat{\alpha}_{\lambda} \in \mathbb{F}S_N$. So, \eqref{2nd term in main constrain eqn _1} becomes
\begin{align}
 &\sum\limits_{\lambda \vdash N}\left[\sum_{\substack{ i < j}}\sum\limits_{\pi \in \text{Fix}_{ij}(S_N)}\beta_\pi (\pi \circ \hat{\alpha}_\lambda) \cdot (h^{\otimes (N/2)})_{a_1 \ldots a_{N}\setminus\{a_i,a_j\}a_ia_j}\right] \notag\\&\hspace{3cm}= \sum\limits_{\lambda \vdash N}\sum\limits_{\rho\in S_N} \gamma_\rho^{(\lambda)} \rho\cdot (h^{\otimes (N/2)})_{a_1 \ldots a_{N}},
\end{align}
for some $\gamma_\rho^{(\lambda)}(t) \in \mathbb{F}$. Plugging back to \eqref{main constraint equation} and requiring $D>N$ yields
\begin{equation}
 T^{(\lambda)}_{a_1 \ldots a_{N}} = \frac{1}{c_\lambda + (D-1)N/2}\sum\limits_{\rho\in S_N} \gamma_\rho^{(\lambda)} \rho\cdot (h^{\otimes (N/2)})_{a_1 \ldots a_{N}}.
\end{equation}
Therefore,
\begin{align}
 T_{a_1 \ldots a_{N}} &= \sum\limits_{\lambda \vdash N}\left[\sum\limits_{\rho\in S_N} \frac{1}{c_\lambda + (D-1)N/2}\gamma_\rho^{(\lambda)} \rho\cdot (h^{\otimes (N/2)})_{a_1 \ldots a_{N}}\right] \notag\\
 &=\sum\limits_{r}\sum\limits_{\rho \in S_N}\alpha_{\rho r}\rho \cdot M^{(r)}_{a_1 \ldots a_N}, \hspace{3cm} \hspace{-2.5cm} \alpha_{\rho r} = \delta_{r,(N/2)}\sum\limits_{\lambda \vdash N} \frac{1}{c_\lambda + (D-1)N/2}\gamma_\rho^{(\lambda)},
\end{align}
which is exactly in the form (for the fully spatial part) that we want, i.e. a linear combination of \textit{g-u} monomials of rank $(N,r)$.

\par As for the mixed components (not fully spatial and not fully temporal part of the tensor T), these can be obtained by contracting T with the foliation vector $u^\alpha = \delta^\alpha_0$. For instance,
\begin{equation}
 T_{\mu_1\ldots \mu_{k-1}0\mu_{k+1}\ldots\mu_N} = T_{\mu_1\ldots \mu_N}\delta^{\mu_k}_0 =T_{\mu_1\ldots \mu_N}u^{\mu_k}. \label{conraction with u}
\end{equation}
By construction, $T_{\mu_1\ldots\mu_N}$ is FLRW compatible and so does the foliation vector $u^{\mu_k}$. Therefore, by the property \ref{FLRW Compatible statement 1}, the resulting tensor in \eqref{conraction with u} is also FLRW compatible. So, any component of T having $M$ many temporal indices and the remainder either temporal or spatial, is obtained by contracting the tensor with $M$ many foliating vectors $u^\alpha$. Since the resulting component is also an FLRW compatible tensor of rank lower than N, our inductive argument guarantees this to be expressed as a linear combination of all possible \textit{g-u} monomials of rank $(N-M,r)$. Eventually, we can calculate all possible mixed components of the tensor to be a linear combination of \textit{g-u} monomials of corresponding rank. In fact, it is easy to check that in this process we do not miss a component of the tensor T. Letting an index be labeled either as temporal or spatial, it has two possibilities. So, in total we have $2^N$ combinations for the mixed components. In this, the ones that have $M \neq 0$ temporal indices make up $\binom{N}{M}$ of $2^N$, each of which is handled by our inductive argument. The $M = 0$ case corresponds to the fully spatial component, which we calculated to be in the form we want by utilizing the symmetry constraint equation \eqref{main constraint equation}. Then, these give $\sum\limits_{M=1}^N\binom{N}{M} + 1$ many components, which is nothing but $\sum\limits_{M=0}^N\binom{N}{M} = 2^N$. So, it is verified. 
\par In the end, having all the mixed components of T calculated this way, the full form $T_{\mu_1\ldots \mu_N}$ can be re-constructed from these, by taking the tensor product with $u^\alpha$ $M$ many times (reaching the rank $N$) and by taking the resulting $N$-rank tensor's appropriate linear combinations, just like the case for the second-rank tensor, yielding again a linear combination of \textit{g-u} monomials of rank N. Finally, since we have verified the base cases $N=0, 1$ and demonstrated that for all $N\geq 2$ it follows from the truth of all previous cases $1, \ldots N-1$, the statement holds for all finite $N \geq 2$ as well by strong induction.
\par In the proof, we did not put much effort to calculate explicitly 
whether the expansion coefficients in the final expression will be only time dependent or not. Nonetheless, we will show that these coefficients are uniquely determined by the scalars obtained from the contractions of $g_{\alpha\beta}$ and $u_\alpha$ with the tensor, yielding FLRW compatible scalars.

\subsubsection{The Expansion Coefficients}
\indent \par To show that the expansion coefficients in the expression \eqref{main theorem expansion}, i.e. $\alpha(t)$, are, in fact, only time dependent; we shall accept the result of Theorem \eqref{theorem 1} for general $\alpha \in \mathbb{F}$ (which was implicitly proved in the previous section) and derive their form in terms of the tensor that is expanded.
\par First of all, we can equivalently use \textit{h-u} monomials of rank $(N,r, \sigma)$ instead of those \textit{g-u} in the expansion \eqref{main theorem expansion}, because $h_{\alpha\beta} = g_{\alpha\beta} + u_{\alpha\beta}$ and the scalars $\alpha$ can be re-arranged accordingly. Then, we can utilize the following results.

\begin{Lemma}\label{Lemma: orthogonality of h-u monomials}
 Let $K^{(r,\sigma)}_{\mu_1 \ldots \mu_N}$ and $K^{(r',\pi)}_{\mu_1 \ldots \mu_N}$ denote \textit{h-u} monomials of rank $(N,r,\sigma)$ and $(N,r',\pi)$ respectively. Then, 
 \begin{equation}
 K_{\mu_1\ldots \mu_N}^{(r,\sigma)}{K^{(r',\pi)}}_{}^{\mu_1\ldots \mu_N} = c(t)\delta_{r,r'},
\end{equation}
where $c(t)$ is some FLRW-compatible scalar. 
\par Additionally, if $r = r'$, the contraction yields non-zero only when $\sigma$ and $\pi$ send the index set of $u$ constituents to the same index set. In other words, the $u^{\otimes(N-2r)}$ constituent of both monomials occupy the same set of indices.
\end{Lemma}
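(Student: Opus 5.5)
The plan is to reduce the full contraction of two $h$-$u$ monomials to a product of elementary contractions of their constituents, using only the algebraic identities
\begin{equation}
u_\mu u^\mu=-1,\qquad h_{\mu\nu}u^\nu=0,\qquad h_{\mu\nu}h^{\nu}{}_{\rho}=h_{\mu\rho},\qquad h^\mu{}_\mu=D .
\end{equation}
I would write $K^{(r,\sigma)}$ as a product of $r$ factors $h$ on index pairs $P_1,\dots,P_r$ and $N-2r$ factors $u$ on a set $U_\sigma\subset\{1,\dots,N\}$, with $|U_\sigma|=N-2r$. The permutation $\sigma$ only fixes which slots carry which constituent. Similarly $K^{(r',\pi)}$ has pairs $Q_1,\dots,Q_{r'}$ and $u$-slot set $U_\pi$.

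\textbf{Step 1: vanishing unless the $u$-slots coincide.} Suppose some slot $k\in U_\sigma$ is not in $U_\pi$. Then in the second monomial slot $k$ belongs to some $h$-pair $Q_l$. The full contraction therefore contains the factor $u_{\mu_k}h^{\mu_k\nu}=0$, so the whole product vanishes. By symmetry the contraction is zero unless $U_\sigma\subseteq U_\pi$ and $U_\pi\subseteq U_\sigma$, that is $U_\sigma=U_\pi$. In particular $N-2r=N-2r'$, which gives the factor $\delta_{r,r'}$ and also the second claim of the lemma.

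\textbf{Step 2: the surviving case.} When $U_\sigma=U_\pi$, the $u$-slots contract pairwise to $(u_\mu u^\mu)^{N-2r}=(-1)^{N-2r}$. The remaining $2r$ slots carry the two perfect matchings $\{P_i\}$ and $\{Q_l\}$ on the same slot set. Their union is a disjoint union of even cycles (a $2$-cycle when $P_i=Q_l$). Each cycle contracts, via $h_{\mu\nu}h^{\nu}{}_{\rho}=h_{\mu\rho}$, to a closed chain $h^\mu{}_\mu=D$. So
\begin{equation}
c=(-1)^{N-2r}D^{\,\#\text{cycles}},
\end{equation}
a constant. A constant is trivially time dependent only and FLRW-compatible; it is also a full contraction of FLRW-compatible tensors, by Properties \ref{FLRW Compatible statement 1} and \ref{FLRW Compatible statement 2}. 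This gives $c(t)$ as stated.

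\textbf{Main obstacle.} The only real difficulty is bookkeeping. One must make the slot-set description of $\sigma\cdot(\cdots)$ precise, in particular that the action in \eqref{Action of S_N on T} just relabels which index positions hold $h$ versus $u$. One must also justify the cycle decomposition of the two matchings. I would handle the latter by the standard argument that every vertex has degree two in the union graph, so the graph decomposes into cycles alternating between $P$-edges and $Q$-edges.
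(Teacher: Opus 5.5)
Your proof is correct and follows the paper's argument: both get the vanishing from $h_{\alpha\beta}u^\alpha=0$ whenever a $u$-slot of one monomial meets an $h$-slot of the other, with $r\neq r'$ as the special case $|U_\sigma|\neq|U_\pi|$. Your cycle-counting evaluation $c=(-1)^{N-2r}D^{m}$, with $m$ the number of cycles of the two matchings, goes slightly beyond the paper, which only invokes Property \ref{FLRW Compatible statement 1} to say $c$ is an FLRW-compatible scalar; your formula also reproduces the Gram-matrix entries ($D^2$, $D$, $1$) used later.
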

\begin{proof}
 The case $r \neq r'$ is obvious, because in such scenario there will be an overlapping contraction between $u_{\mu_i}$ and $h_{\mu_i\mu_j}$ yielding zero by the orthogonality relation $h_{\alpha\beta}u^\alpha = 0$. On the other hand, when $r=r'$, the contraction will yield an FLRW-compatible scalar by property \ref{FLRW Compatible statement 1}. In fact, the contraction yields non-zero only when $r = r'$ and both have their $u$ constituents occupy the same set of indices. Otherwise, a similar overlap between $u_{\mu_i}$ and $h_{\mu_i\mu_j}$ will return zero. For example, $h_{\mu_1\mu_2}h_{\mu_3\mu_4}u_{\mu_5}u_{\mu_6}$ is not orthogonal to $h_{\mu_1\mu_3}h_{\mu_2\mu_4}u_{\mu_5}u_{\mu_6}$ but, it is orthogonal to $h_{\mu_1\mu_2}h_{\mu_3\mu_5}u_{\mu_4}u_{\mu_6}$. Because the $u^{\otimes2}$ constituent occupies the index set $\{5,6\}$ in the first, and $\{4,6\}$ in the latter.
\end{proof}
\begin{Lemma} \label{Lemma: linear independence of h-u monomials}
 $\{K^{(r,\sigma)}_{\mu_1 \ldots \mu_N} : \sigma \in S_{(N,r)}, r \leq N/2 \}$ is linearly independent only if $D\geq (N-1)/2$.
\end{Lemma}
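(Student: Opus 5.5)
The plan is to prove the contrapositive: if $D<(N-1)/2$, then I will write down an explicit nontrivial linear relation among the \textit{h-u} monomials $K^{(r,\sigma)}_{\mu_1\ldots\mu_N}$ with $\sigma\in S_{(N,r)}$. Since $D$ and $N$ are integers, $2D<N-1$ gives $2D+2\le N$. So I can choose $r=D+1$, which is an admissible value because $r\le N/2$. The relation will come from the fact that the spatial metric $h_{\alpha\beta}$ has rank $D$ at every point. It is nonzero only on the $D$-dimensional spatial subspace and satisfies $h_{\alpha\beta}u^\alpha=0$. Hence no object antisymmetric in $D+1$ slots can be built out of $h$.

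\textbf{Construction of the relation.} Split the first $2D+2$ indices into two groups, $a_1,\ldots,a_{D+1}$ (namely $\mu_1,\ldots,\mu_{D+1}$) and $b_1,\ldots,b_{D+1}$ (namely $\mu_{D+2},\ldots,\mu_{2D+2}$). Attach $u_{\mu_{2D+3}}\otimes\cdots\otimes u_{\mu_N}$ to the remaining $N-2D-2$ indices, and consider
\begin{equation}
X_{\mu_1\ldots\mu_N}=\sum_{\pi\in S_{D+1}}\mathrm{sgn}(\pi)\, h_{a_1 b_{\pi(1)}}h_{a_2 b_{\pi(2)}}\cdots h_{a_{D+1}b_{\pi(D+1)}}\,u_{\mu_{2D+3}}\cdots u_{\mu_N}.
\end{equation}
Each summand is an \textit{h-u} monomial of rank $(N,D+1,\sigma)$ for a suitable $\sigma$, since it consists of $D+1$ factors of $h$ and $N-2D-2$ factors of $u$. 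Distinct $\pi$ give distinct perfect matchings between the $a$'s and the $b$'s, all with the same $u$-index set. By the orbit--stabilizer description of $\mathcal{M}_{(N,r)}$, distinct matchings are distinct elements of $\mathcal{M}_{(N,D+1)}$, so they correspond to distinct representatives in $S_{(N,D+1)}$. Therefore $X$ is a linear combination of pairwise distinct basis candidates with coefficients $\pm1$. Since $D+1\ge 1$, this combination is nontrivial.

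\textbf{Vanishing of $X$.} Fix $a_1,\ldots,a_{D+1}$ and the $u$-indices, and view $X$ as a function of $(b_1,\ldots,b_{D+1})$. It is the determinant $\det[h_{a_i b_j}]$, times $u$-factors, so it is totally antisymmetric in the $b$'s. For each $i$, the row covector $h_{a_i\,\cdot}$ annihilates $u^\alpha$. Hence every row lies in the $D$-dimensional space of spatial covectors, the annihilator of $u$ in the $(D+1)$-dimensional cotangent space. Any $D+1$ such rows are linearly dependent, so the determinant vanishes identically. Equivalently, $X$ contains the generalized Kronecker delta $\delta^{b_1\ldots b_{D+1}}_{a_1\ldots a_{D+1}}$ restricted to the $D$ spatial directions, and that delta vanishes. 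Thus $X\equiv 0$ at every point, which is a nontrivial dependency. This proves that linear independence forces $D\ge (N-1)/2$.

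\textbf{Main obstacle.} The step needing the most care is the bookkeeping that the $(D+1)!$ summands are genuinely distinct elements of the set $\{K^{(r,\sigma)}:\sigma\in S_{(N,r)}\}$, so that no cancellation makes the relation trivial. I will handle this by identifying each monomial with its pair (set of $u$-slots, perfect matching of the remaining slots), using the stabilizer description from the definition of $\mathcal{M}_{(N,r)}$. Under this identification, the map from $\pi$ to the matching $\{(a_i,b_{\pi(i)})\}$ is injective.
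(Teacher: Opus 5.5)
Your proof is correct. For the direction actually asserted, it takes a different and more explicit route than the paper.

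The paper reduces to a fixed $r$ and fixed $u$-slots, using orthogonality of $h$-$u$ monomials with different $u$-index sets. It then evaluates $\sum_\sigma c_\sigma\,(h^{\otimes r})=0$ on components. When $D\ge r$, giving the $r$ pairs of one matching $r$ distinct spatial values isolates a single coefficient. When $D<r$, it observes that every component forces two pairs to share a value, so each component equation involves several coefficients, and it concludes that the coefficients ``always have freedom.'' That last step is only heuristic: equations that each involve two or more unknowns can still jointly force all of them to vanish (think of $c_1+c_2=0$, $c_1-c_2=0$).

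Your relation
$\sum_{\pi\in S_{D+1}}\mathrm{sgn}(\pi)\prod_i h_{a_i b_{\pi(i)}}\,u\cdots u=0$ at $r=D+1$ is the vanishing of every $(D+1)\times(D+1)$ minor of the rank-$D$ matrix $h_{\alpha\beta}$. It exhibits an explicit nontrivial dependence with coefficients $\pm1$. Your integer step $D<(N-1)/2\Rightarrow 2D+2\le N$ reproduces the paper's threshold, which for integers reads $D\ge\lfloor N/2\rfloor$.

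What the paper's component evaluation buys in return is the converse: independence whenever $D\ge r$ for all admissible $r$. That is the direction later used to make the Gram matrix invertible. Your argument does not give it, although the statement as written does not ask for it.

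Your bookkeeping through $S_{(N,D+1)}$ is the right one. For $D\ge 2$, distinct matchings are distinct tensors. For $D=1$ they coincide, so the dependence appears as repeated members of the indexed family. That is still dependence in the sense presupposed by the paper's count $N!/(2^r r!(N-2r)!)$.
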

\begin{proof}
 $r\neq r'$ case follows from the first part of Lemma \ref{Lemma: orthogonality of h-u monomials}. 
 \par If $r = r'$, we have two cases: $u^{\otimes(N-2r)}$ constituent has the same set of indices or not. If the set differs, then they are linearly independent as they are orthogonal by the second part of Lemma \ref{Lemma: orthogonality of h-u monomials}. If the sets are the same, then the $u^{\otimes(N-2r)}$ constituents are equivalent since the ordering of $u$ constituents does not matter. Then, denoting the index sets occupied by $u^{\otimes(N-2r)}$ and $h^{\otimes(r)}$ constituents by $\mathcal{U}\subset \{1,\ldots,N\}$ and $\mathcal{H} = \{1,\ldots,N\}\backslash \mathcal{U}$ respectively, it suffices to show that $\left(h^{\otimes r}\right)_{\mu_{\sigma^{-1}(1)} \ldots \mu_{\sigma^{-1}(2r)}}$ constituent of the monomials
 are linearly independent for different $\sigma \in \mathcal{S}$, 
 where $\mathcal{S} = \{\sigma \in S_{(N,r)} : \sigma\left(\{1,\ldots,2r\}\right) = \mathcal{H}\}$. Therefore, we need to show that the equation
 \begin{equation}
 \sum_{\sigma \in \mathcal{S}}c_{\sigma}\left(h^{\otimes r}\right)_{\mu_{\sigma^{-1}(1)} \ldots \mu_{\sigma^{-1}(2r)}} = 0, \label{linear independence eqn to prove}
\end{equation}
 is satisfied only when $c_{\sigma} = 0$ for all $\sigma \in \mathcal{S}$.
 \par Given an element $\tau \in \mathcal{S}$, we consider the following component in the tensor equation \eqref{linear independence eqn to prove}:
 \begin{align}
 \mu_{\tau^{-1}(1)} &= \mu_{\tau^{-1}(2)} = 1 \notag\\
 \mu_{\tau^{-1}(3)} &= \mu_{\tau^{-1}(4)} = 2 \notag\\
 &\vdots \notag\\
 \mu_{\tau^{-1}(2r-1)} &= \mu_{\tau^{-1}(2r)} = r \notag,\\
\end{align}
 which is possible if $D \geq r$. But, since $h$ is a diagonal matrix, only the term with $c_{\tau}$ remains in \eqref{linear independence eqn to prove} when evaluated at this component, yielding $c_{\tau} = 0$. Repeating this for each $\sigma \in \mathcal{S}$ gives all $c_{\sigma} = 0$, thus they are linearly independent. On the other hand, if $D<r$, at least two different pairs in the above setting will be assigned the same index. As a result, at least two different monomials in equation \eqref{linear independence eqn to prove} will not vanish, giving a non-trivial relationship between the corresponding coefficients. In fact, any choice of component where at least two pairs are assigned the same index will either return the identical constraint equation (if the number of pairs assigned the same index is kept) or return a less restrictive constraint equation (if the number of pairs assigned the same index is increased). Consequently, the coefficients will always have freedom (never restricted to be zero), thus the monomials are linearly dependent. To illustrate, we can consider $r=3$ labeled monomials in the expansion of a rank-6 tensor with $D = 2$. If we let the component be $\mu_{\tau^{-1}(1)} = \mu_{\tau^{-1}(2)} = i$, $\mu_{\tau^{-1}(3)} = \mu_{\tau^{-1}(4)} = j$, $\mu_{\tau^{-1}(5)} = \mu_{\tau^{-1}(6)} = j$ with $i,j \in {1,2}$ and $i \neq j$; \eqref{linear independence eqn to prove} gives:
 \begin{equation}
 h_{ii}\left( c_1h_{jj}h_{jj} + c_2h_{jj}h_{jj} + c_3h_{jj}h_{jj}\right) = 0,
\end{equation}
 yielding $(c_1 + c_2 + c_3 = 0)$. In fact, if we were to set all indices equal to $i \in \{1,2\}$, \eqref{linear independence eqn to prove} would give $\sum_{m=1}^{15}c_m = 0$. The latter is clearly less restrictive on the coefficients.
 \par Finally, if we require the whole set $\{K^{(r,\sigma)}_{\mu_1 \ldots \mu_N} : \sigma \in S_{(N,r)}, r \leq N/2 \}$ to be linearly independent, the above constraint on $D$ must hold for any $r \leq N/2$. Therefore, we must have $D\geq (N-1)/2$, exploiting the fact that $r$ and $D$ are integers.
\end{proof}
Returning to the calculation of the expansion coefficients, we group the monomials in the expansion according to their $r$ label and index set occupied by $u^{\otimes(N-2r)}$. For a rank-4 tensor, for instance, the expansion is expressed as follows.
\begin{align}
 T_{\mu_1\mu_2\mu_3\mu_4} = &+[ \alpha_{01}u_{\mu_1}u_{\mu_2}u_{\mu_3}u_{\mu_4}]\notag\\ 
 & + \left[ \begin{array}{c}
 \alpha_{11}h_{\mu_1\mu_2}u_{\mu_3}u_{\mu_4} + \alpha_{12}h_{\mu_1\mu_3}u_{\mu_2}u_{\mu_4} \notag +\alpha_{13}h_{\mu_2\mu_3}u_{\mu_1}u_{\mu_4} \\ +\alpha_{14}h_{\mu_2\mu_4}u_{\mu_1}u_{\mu_3} \notag +\alpha_{15}h_{\mu_1\mu_4}u_{\mu_2}u_{\mu_3} +\alpha_{16}h_{\mu_3\mu_4}u_{\mu_1}u_{\mu_2}
 \end{array}\right] \notag\\
 &+\left[\alpha_{21}h_{\mu_1\mu_2}h_{\mu_3\mu_4} +\alpha_{22}h_{\mu_1\mu_3}h_{\mu_2\mu_4} +\alpha_{23}h_{\mu_2\mu_3}h_{\mu_1\mu_4} \right].
\end{align}
Similarly, the $r=2$ labeled monomials in the expansion of an rank-5 tensor has 5 distinct index sets occupied by $u^{\otimes 1}$.
\begin{align}
 \left[\begin{array}{c}
 \left(\alpha_{211}u_{\mu_1}h_{\mu_2\mu_3}h_{\mu_4\mu_5} + \alpha_{212}u_{\mu_1}h_{\mu_2\mu_5}h_{\mu_4\mu_3} +\alpha_{213}u_{\mu_1}h_{\mu_2\mu_4}h_{\mu_3\mu_5}\right) \\
 +\left(\alpha_{221}u_{\mu_2}h_{\mu_1\mu_3}h_{\mu_4\mu_5} + \alpha_{222}u_{\mu_2}h_{\mu_1\mu_5}h_{\mu_4\mu_3} +\alpha_{223}u_{\mu_2}h_{\mu_1\mu_4}h_{\mu_3\mu_5}\right) \\
 +\left(\alpha_{231}u_{\mu_3}h_{\mu_2\mu_1}h_{\mu_4\mu_5} + \alpha_{232}u_{\mu_3}h_{\mu_2\mu_5}h_{\mu_4\mu_1} +\alpha_{233}u_{\mu_3}h_{\mu_2\mu_4}h_{\mu_1\mu_5}\right) 
 \\
 +\left(\alpha_{341}u_{\mu_4}h_{\mu_2\mu_1}h_{\mu_3\mu_5} + \alpha_{342}u_{\mu_4}h_{\mu_2\mu_5}h_{\mu_3\mu_1} +\alpha_{343}u_{\mu_4}h_{\mu_2\mu_3}h_{\mu_1\mu_5}\right)
 \\
 +\left(\alpha_{351}u_{\mu_5}h_{\mu_2\mu_1}h_{\mu_4\mu_3} + \alpha_{352}u_{\mu_5}h_{\mu_2\mu_3}h_{\mu_4\mu_1} +\alpha_{353}u_{\mu_5}h_{\mu_2\mu_4}h_{\mu_1\mu_3}\right)
 \end{array}\right].
\end{align}
As a reminder, the number of terms grouped by $r$ label is given by the formula in \eqref{number of distinct (N,r) monomial formula}. 
\par Now, we calculate the contraction of $T_{\mu_1\ldots\mu_N}$ with each of the \textit{h-u} monomial on right-hand side, yielding a system of linear equations on the coefficients $\alpha_{r\sigma}$. Then, the corresponding coefficient matrix for this system of linear equations becomes a Gram matrix, whose entries are given by $G_{ij} = K_{\mu_1\ldots \mu_N}^{(r,\sigma)_i}{K^{(r,\sigma)_j}}_{}^{\mu_1\ldots \mu_N}$ where $i,j \in \mathbb{N}$ labels the pair $(r,\sigma)$ with $\sigma \in S_{(N,r)}, r \leq N/2$. Thus, $i,j \in \{1,2\ldots, \sum_rS_{(N,r)}\}$ by \eqref{number of distinct (N,r) monomial formula}. In fact, if we order the coefficients according to their $r$ label, the coefficient matrix takes a block-diagonal form.

\par To make the construction more concrete, we provide the matrix form of the system of linear equations on the coefficients of the monomial expansion of rank-3 and rank-4 FLRW compatible tensors as examples.
\begin{align}
\left(
\begin{array}{cc}
1 & \\
&\left(\begin{array}{ccc}
 D & & \\
 & D & \\
 & & D
\end{array}\right)_{3\times 3}
\end{array}\right)
\left(\begin{array}{c}
\alpha_0 \\ \alpha_{11} \\ \alpha_{12} \\ \alpha_{13}
\end{array}\right) = -\left(\begin{array}{c}
T_{000}\\ T_m{}^m{}_0 \\ T_{m0}{}^m\\ T_{0m}{}^m
\end{array}\right),
\end{align}

\begin{align}
\left(
\begin{array}{ccc}
1 & \\
&\left(\begin{array}{ccc}
 D & & \\
 & \ddots & \\
 & & D
\end{array}\right)_{6\times 6}\\
&&\left(\begin{array}{ccc}
 D^2 & D & D \\
 D & D^2 & D \\
 D & D & D^2
\end{array}\right)_{3\times3}
\end{array}\right)
\left(\begin{array}{c}
\alpha_0 \\ \alpha_{11} \\ \vdots \\ \alpha_{16}\\\alpha_{21}\\\alpha_{22}\\\alpha_{23}
\end{array}\right) = \left(\begin{array}{c}
T_{0000}\\ T_m{}^m{}_{00} \\ \vdots\\ T_{00m}{}^m\\T_m{}^m{}_n{}^n\\T_m{}^n{}_m{}^n\\T_m{}^n{}_n{}^m
\end{array}\right).
\end{align}

\par We went through the effort of understanding the orthogonality properties of monomials to finally illustrate that the entries of the constant vector on right-hand side are always FLRW compatible scalars by property \ref{FLRW Compatible statement 1}, and the coefficient matrix, which is a Gram matrix, is always invertible since the monomials are linearly independent by Lemma \ref{Lemma: linear independence of h-u monomials} assuming that $D\geq (N-1)/2$ (see Theorem~7.2.10 in \cite{matrix_analysis}). Then, with the assumption $D>N$ of Theorem \ref{theorem 1} satisfying the requirement for Lemma \ref{Lemma: linear independence of h-u monomials}, the coefficients are uniquely determined by the contractions of the tensor, i.e. the entries of right-hand side, and all depend only on time.

\section{Applications to Modified Theories of Gravity} \label{section: applications}

\par We now present the consequences of Theorem \ref{theorem 1}. First, we derive the tensorial consequence for Einstein-like rearrangements of modified gravitational field equations. Second, we apply the resulting effective perfect-fluid form to generalized Friedmann equations and to the geodesic deviation equation. The model-dependent examples discussed afterward should be understood as illustrations of this general tensorial mechanism rather than as separate assumptions entering the proof of the theorem.
\par We stress that all results in this section are background-level statements on FLRW-compatible fields. They constrain the tensorial form of the field equations, but they do not determine the dynamics without specifying the action and the additional field equations of the chosen theory. Consequently, the universal perfect-fluid form should be interpreted as tensorial universality at the FLRW level, not as full dynamical equivalence among different modified theories.

\par Before moving on, we remind ourselves of some crucial relations which will simplify computation.

\subsection{Review of FLRW Geometry} \label{section: review of FLRW geometry}
\par FLRW metric in $D+1$ dimensions is given by,
\begin{equation}
 ds^2 = -dt^2 +R^2(t)\left( \frac{dr^2}{1-kr^2} + r^2d\Omega_{D-1}^2 \right), \label{FLRW metric in D+1}
\end{equation}
where $R(t)$ is the time dependent scale parameter, $k$ is the spatial curvature and $d\Omega^2_{D-1}$ is the infinitesimal line element of a $D-1$ dimensional unit sphere. \par In this geometry, the metric accepts the foliation $g_{\alpha\beta} = -u_\alpha u_{\beta} + h_{\alpha\beta}$ with $h_{\alpha\beta}$ denoting the induced metric of the spatial hypersurfaces which are homogeneous and isotropic, and orthogonal to the time-like foliation vector $u_{\alpha} = -\delta^0_\alpha$. By this foliation, covariant derivative is decomposed as,
\begin{equation}
 \nabla_\beta = -u_\beta u^\alpha\nabla_\alpha + h^\alpha_\beta \nabla_\alpha.
\end{equation}

\par Because the foliation vector is along the time-direction, i.e. $u^\alpha = \delta^\alpha_0$, the induced metric of the hypersurfaces $h_{\alpha\beta}$ vanishes for the temporal part, leaving $h_{ij}$ only. Furthermore, by \eqref{theorem 1}, scalar quantities are exclusively time-dependent. Thus, 
\begin{equation}
 \nabla_\alpha \phi = -u_{\alpha}\dot{\phi}, \label{covariant derivative on scalar}
\end{equation}
for any scalar quantity $\phi$ ($\dot{\phi}$ denoting the time derivative).
\par As for the vector fields, we note the following relation valid for FLRW geometry (no shear and rotation)
\begin{equation}
 \nabla_{\alpha}u_\beta = \frac{1}{D}\theta h_{\alpha\beta}. \label{covariant derivative on vector}
\end{equation}
Here, $\theta$ is the expansion parameter associated to the hypersurfaces, proportional to Hubble's parameter $H(t)$ by $\theta = DH(t)$ (See Chapter~2 in \cite{poisson}).
\par Together with metric compatibility $\nabla_\alpha g_{\mu\nu} = 0$, higher order derivatives of any dynamical tensor field can be computed easily by using \eqref{covariant derivative on scalar} and \eqref{covariant derivative on vector} only, based on the decomposition in \eqref{theorem 1}. Some useful relations are given below.
\begin{align}
\nabla_\mu u^\mu &= DH \notag\\ 
 \nabla_\mu\nabla_\nu\phi &= -h_{\mu\nu}H\dot{\phi} +u_\mu u_\nu\ddot{\phi} \notag\\
 \nabla_{\mu}\nabla^{\mu}\phi &= \square\phi = -DH\dot{\phi} - \ddot{\phi} \notag\\
 h^{\mu\nu}\nabla_\mu\nabla_\nu &= -DH\dot{\phi} \notag\\
 u^\mu u^\nu\nabla_\mu\nabla_\nu\phi &= \ddot{\phi}. \label{useful properties in FLRW}
\end{align}

\par Finally we note the perfect matter fluid energy momentum tensor, which is the source term for the usual spatially homogeneous and isotropic cosmological solution.
\begin{align}
 T_{\alpha\beta} = (\rho^{(M)} + p^{(M)} )u_\alpha u_\beta + p^{(M)}g_{\alpha\beta},
\end{align}
where $\rho^{(M)}$ and $p^{(M)}$ denotes the energy density and pressure associated to the matter distribution, with 
\begin{align}
 T_{\alpha\beta}u^\alpha u^\beta &= \rho^{(M)}, \notag\\
 T_{\alpha\beta}h^{\alpha\beta} &= Dp^{(M)}, \notag\\
 T_{\alpha\beta}g^{\alpha\beta} &:= T = Dp^{(M)} - \rho^{(M)}.
\end{align}

\subsection{Field Equations}\label{Section: field equations of generalized GR in cosmology}

\par We will start by calculating the purely geometrical relations for FLRW in $D+1$ dimensions, that is independent of the theory of Gravity.
\par For the metric \eqref{FLRW metric in D+1}, Weyl Tensor vanishes and Ricci Tensor and Ricci Scalar have the following simple forms, \cite{metin}
\begin{align}
 R_{\alpha\beta} &= g_{\alpha\beta}(D\rho_1 - \rho_2) + u_{\alpha}u_\beta(D-1)\rho_2,\notag\\
 R &= D((D-1)\rho_1 - 2\rho_2), \label{Ricci tensor & Ricci Scalar in FLRW in D+1 (GEOMETRY)}
\end{align}
where, 
\begin{align}
 \rho_1 = H^2 + \frac{k}{R^2}, \hspace{0.5cm} 
 \rho_2 = -\dot{H} + \frac{k}{R^2}, \hspace{0.5cm} 
 H = \frac{\dot{R}}{R},
\end{align}
and the Riemann Tensor has the usual form (with Weyl vanishing),
\begin{equation}
 R_{\alpha\beta\mu\nu} = \frac{1}{(D-1)}\left( g_{\alpha\mu}R_{\beta\nu} - g_{\alpha\nu}R_{\beta\mu} + g_{\beta\nu}R_{\alpha\mu} - g_{\beta\mu}R_{\alpha\nu} \right) - \frac{1}{D(D-1)}R\left( g_{\alpha\mu}g_{\beta\nu} - g_{\alpha\nu}g_{\beta\mu} \right). \label{Riemann Tensor in FLRW in D+1}
\end{equation}
\par This concludes the purely geometrical quantities needed to set the left-hand side of modified Einstein field equations. We now discuss the tensorial form of the metric equations for theories whose field equations admit an Einstein-like rearrangement. We consider the following broad class of actions introduced in \eqref{generalized action}
\begin{equation}
 S = \int d^{(D+1)}x\sqrt{-g}\left\{ \mathcal{L}_{matter} +\mathcal{L}_{fields}\left(g_{\mu\nu}, R^\rho_{\sigma\mu\nu}, \nabla_\alpha, A^{(i)}_{\mu_1\ldots\mu_k}, \ldots\right)\right\}. \label{generalized action_2}
\end{equation}
The Lagrangian is decomposed into the usual matter Lagrangian $\mathcal{L}_{matter}$, and a Lagrangian of the fields of the theory $\mathcal{L}_{fields}$. As before, the latter is essentially a generalized version of Einstein--Hilbert Lagrangian, potentially composed of all possible contractions of the metric tensor $g_{\mu\nu}$, Riemann Tensor $R^\rho_{\sigma\mu\nu}$, and dynamical tensor fields of the specific modified theory to be denoted by $A^{(i)}_{\mu_1\ldots\mu_k}$ (we label them by $i$ with $k \in \{0,\ldots, N\}$ being the rank), their covariant derivatives of any order and their tensor products.

\par If the metric field equations obtained by varying the action with respect to $g_{\mu\nu}$ can be rearranged into an Einstein-like form, then we write them as
\begin{equation}
 G_{\mu\nu} = fT^{(M)}_{\mu\nu} + Q_{\mu\nu}. \label{main field equation}
\end{equation}

\noindent Here $G_{\mu\nu}$ and $T^{(M)}_{\mu\nu}$ denote the Einstein tensor and the usual energy--momentum tensor, respectively. The scalar $f$ and the second-rank tensor $Q_{\mu\nu}$ are constructed from contractions of the constituents of the action and their covariant derivatives. In the Einstein--Hilbert limit, they reduce to $\frac{8\pi G}{c^4}$ and $0$, respectively. The assumption in \eqref{main field equation} is therefore not an additional dynamical restriction; it is the statement that all non-Einstein contributions have been moved to the right-hand side whenever such a rearrangement is possible.

\par If we analyze the above modified theory in FLRW background, and adopt FLRW compatibility for all dynamical fields, then by Corollary \eqref{main corollary}, $Q_{\mu\nu}$ has the perfect fluid form and $f$ is only time-dependent. Furthermore, $T^{(M)}_{\mu\nu}$ also has the perfect-fluid form. Therefore, the following decompositions are valid.
\begin{align}
 T_{\mu\nu}^{(M)} &= (\rho^{(M)} + p^{(M)})u_\mu u_\nu + p^{(M)} g_{\mu\nu}, \\[0.3cm]
 Q_{\mu\nu} := T_{\mu\nu}^{(F)} &= (\rho^{(F)} + p^{(F)})u_\mu u_\nu + p^{(F)} g_{\mu\nu}.
\end{align}

\par For scalar--vector--tensor theories, the corresponding rank-two effective-fluid structure has been established directly in \cite{SVT-FLRW}. Here, this structure is recovered as a special case of the general rank-$N$ FLRW-compatible tensor decomposition proved above, since the latter classifies the admissible tensorial building blocks before specializing to the symmetric rank-two metric equation. 

\par Furthermore, a closely related line of work establishes the
perfect-fluid character of the gravitational corrections directly for specific
classes of curvature-based Lagrangians. It was shown in \cite{PF in f(R)} that in
generalized Robertson--Walker spacetimes with divergence-free conformal curvature
the stress--energy tensor of any $f(R)$ model is of perfect-fluid type, and this
was extended in \cite{PF in higher order GR} to all higher-order corrections of
the form $F(R,\square R,\ldots,\square^k R)$. Within that class these results are
obtained without assuming that the fields inherit the background symmetry, and
they extend beyond exact Robertson--Walker geometry. The classification given
here is complementary: the admissible corrections are not restricted to
functionals of the curvature, but may involve dynamical tensor fields of
arbitrary rank together with their covariant derivatives of any order, in
$(D+1)$ dimensions. The perfect-fluid form then follows, through Corollary
\ref{main corollary}, from the rank-two sector of a general classification of
FLRW-compatible tensors rather than from the structure of the curvature
invariants, showing that the underlying mechanism is representation-theoretic and
independent of the corrections being curvature-based. The same mechanism was
formulated in \cite{PF in extended gravity} through the notion of a perfect
scalar, showing that the field equations of extended theories of gravity
naturally contain perfect-fluid terms which may be interpreted as geometric
contributions to the dark sector. A perfect scalar on a Robertson--Walker
background depends only on cosmic time, which is precisely the rank-zero case of
FLRW compatibility and the base step of the induction leading to Theorem
\ref{theorem 1}; the geometric perfect fluids obtained there are therefore the
physical counterpart, at ranks zero and two, of the structure that the present
decomposition establishes for arbitrary rank.

\par Having established the perfect fluid form, we henceforth denote $Q_{\mu\nu}$ by $T^{(F)}_{\mu\nu}$ since it can be treated as an effective energy momentum tensor of the additional fields entering the theory. Because $D+1$ decomposition of the metric applies, i.e. $h_{\alpha\beta} = g_{\alpha\beta} + u_\alpha u_\beta$, we have
\begin{align}
 \rho^{(M)} &= T_{\mu\nu}^{(M)}u^\mu u^\nu \notag, \hspace{0.7cm}
 p^{(M)} = \frac{1}{D}T_{\mu\nu}^{(M)}h^{\mu\nu},\\[0.3cm]
 \rho^{(F)} &= T_{\mu\nu}^{(F)}u^\mu u^\nu, \hspace{0.7cm} 
 p^{(F)} = \frac{1}{D}T_{\mu\nu}^{(F)}h^{\mu\nu}.
\end{align}

\noindent Now, we can combine all the terms and define a total, effective, energy momentum tensor.
\begin{align}
 T_{\mu\nu}^{(T)} = f(t)T_{\mu\nu}^{(M)} + T_{\mu\nu}^{(F)}, \label{total (effective) energy--momentum tensor}
\end{align}

\noindent thus the corresponding variables can be defined similarly:
\begin{align}
 \rho^{(T)} &= T_{\mu\nu}^{(T)}u^\mu u^\nu=f(t)\rho^{(M)} + \rho^{(F)}, \notag\\[0.2cm]
 p^{(T)} &= \frac{1}{D}T_{\mu\nu}^{(T)}h^{\mu\nu} = f(t)p^{(M)} + p^{(F)}. \label{effective densities}
\end{align}
Then, the Einstein tensor has the form
\begin{align}
 G_{\mu\nu} = R_{\mu\nu} - \frac{1}{2}Rg_{\mu\nu} = (\rho^{(T)} + p^{(T)})u_\mu u_\nu + p^{(T)} g_{\mu\nu}, \label{Einstein tensor in FLRW! (Theory description)}
\end{align} 
Taking the trace we calculate,
\begin{align}
 R &= \frac{2}{(1-D)}(Dp^{(T)} -\rho^{(T)}),\\
 R_{\mu\nu}&= (\rho^{(T)} + p^{(T)})u_\mu u_\nu +\frac{1}{(D-1)}(\rho^{(T)} - p^{(T)})g_{\mu\nu}. \label{Ricci Tensor & Ricci Scalar in FLRW in D+1 (THEORY)}
\end{align}
This relation defines the theory dependent description of the Ricci Tensor and the Ricci Scalar, which then can be compared to their geometrical descriptions given in \eqref{Ricci tensor & Ricci Scalar in FLRW in D+1 (GEOMETRY)} to yield the modified Einstein field equations, i.e., generalized Friedmann Equations.
\par We express both \eqref{Ricci tensor & Ricci Scalar in FLRW in D+1 (GEOMETRY)} and \eqref{Ricci Tensor & Ricci Scalar in FLRW in D+1 (THEORY)} in the orthogonal basis of $\{u_{\alpha}u_\beta, \hspace{0.1cm} h_{\alpha\beta}\}$ by using the decomposition $g_{\alpha\beta} = h_{\alpha\beta} - u_{\alpha}u_\beta$.
\begin{align}
 R_{\alpha\beta} &= D(\rho_2 - \rho_1)u_\alpha u_\beta + (D\rho_1 - \rho_2)h_{\alpha\beta}, \quad &\textbf{(geometry)}\notag\\
 R_{\alpha\beta} &= \frac{1}{D-1}\left( (D-2)\rho^{(T)} + Dp^{(T)} \right)u_\alpha u_\beta + \frac{1}{D-1}(\rho^{(T)} - p^{(T)})h_{\alpha\beta}. \quad &\textbf{(theory)}
\end{align}
Now, the two can be equated easily yielding 
\begin{align}
 D(\rho_2 - \rho_1) &= \frac{1}{D-1}\left( (D-2)\rho^{(T)} + Dp^{(T)}\right), \\
 (D\rho_1 - \rho_2) &= \frac{1}{D-1}\left(\rho^{(T)} - p^{(T)}\right).
\end{align}
We can pull $\rho_1$ and $\rho_2$ in terms of $\rho^{(T)}$ and $p^{(T)}$,
\begin{align}
 \rho_1 = \frac{2}{D(D-1)}\rho^{(T)}, \hspace{0.5cm}
 \rho_2 = \frac{1}{(D-1)}(\rho^{(T)} + p^{(T)}).
\end{align}
Indeed, $\rho_1$ and $\rho_2 - \rho_1$ gives the Generalized Friedmann Equations\begin{align}
 \dot{H} + H^2 &= -\frac{1}{D(D-1)}\left( (D-2)\rho^{(T)} + Dp^{(T)}\right), \notag\\
 H^2 + \frac{k}{R^2} &= \frac{2\rho^{(T)}}{D(D-1)}. \label{Generalized Friedmann Equations}
\end{align}
which reduce to the usual Friedman Equations in the limit $D=3$ and $\rho^{(T)} = \kappa\rho^{(M)}$, $p^{(T)} = \kappa p^{(M)}$.

\subsection{Geodesic Deviation Equation}\label{Section GDE in modified GR}
\par Similar to the previous section, we aim to relate the geometrical description of Geodesic Deviation Equation to the theory dependent quantities, consisting exclusively of $\rho^{(T)}$ and $p^{(T)}$ by \eqref{Ricci Tensor & Ricci Scalar in FLRW in D+1 (THEORY)}.
\par We consider a geodesic congruence of a longitudinal time-like vector field $v^\alpha$, not necessarily parallel to $u^\alpha$ of the background cosmic fluid. Let $\xi^\alpha$ denote the components of the deviation vector between the nearby geodesics living in this congruence and define $v^\alpha v_\alpha = \epsilon$, $v^\alpha u_\alpha = -E$. We want to calculate the geodesic deviation equation given by,
\begin{equation}
 \Ddot{\xi^a} = -R^a_{bcd}v^bv^d\xi^c.
\end{equation}

\noindent By \eqref{Riemann Tensor in FLRW in D+1}, we have

\begin{align}
 R^\alpha_{\beta\mu\nu}v^\beta v^\nu\xi^\mu&= \frac{1}{(D-1)}\left[ (R_{\beta\nu}v^\beta v^\nu)\xi^\alpha -(R_{\beta\mu}v^\beta\xi^\mu)v^\alpha + \epsilon(R^\alpha_\mu\xi^\mu)\right] - \frac{R}{D(D-1)}\epsilon \xi^\alpha,
\end{align}
where we used $v^\alpha \xi_{\alpha} = 0$ and $u^\alpha \xi_\alpha = 0$ \cite{GDE_orthogonality}. We can directly apply our result \eqref{Ricci Tensor & Ricci Scalar in FLRW in D+1 (THEORY)} to calculate the contractions.

\begin{align}
 R^\alpha_{\beta\mu\nu}v^\beta v^\nu\xi^\mu
 &= \frac{1}{(D-1)}\xi^\alpha(\rho^{(T)} + p^{(T)})E^2 +\frac{2}{D(D-1)}\rho^{(T)}\epsilon\xi^\alpha.
\end{align}
Therefore, \textbf{the geodesic deviation equation} has the following compact form, for a modified theory with field equations in \eqref{main field equation}.
\begin{align}
 \Ddot{\xi^\alpha} = -\frac{2}{D(D-1)}\epsilon\xi^\alpha\rho^{(T)} - \frac{1}{(D-1)}E^2\xi^\alpha(\rho^{(T)} + p^{(T)}), \label{GDE in FLRW}
\end{align}
recalling the definitions,
\begin{align}
 \rho^{(T)} = (fT^{(M)}_{\mu\nu} + Q_{\mu\nu})u^\mu u^\nu, \notag\\
 p^{(T)} = (fT^{(M)}_{\mu\nu} + Q_{\mu\nu})\frac{1}{D}h^{\mu\nu}.
\end{align}

In what follows, we derive the geodesic deviation equation for two known modified theories of gravity containing different tensor fields (in the FLRW geometry). Doing so, we will see how significantly the algebra simplifies applying the result \eqref{GDE in FLRW}.

\subsubsection{f(R,T) Theory of Gravity}
\par To illustrate the utility of the result \eqref{GDE in FLRW}, we consider a modified theory of gravity where the action is generalized to an arbitrary function of the Ricci Scalar (R) and the trace of the energy--momentum tensor (T) from Einstein--Hilbert action. Setting $\kappa$ to be 1, the action takes the form (see \cite{f(RT)_harko})
\begin{align}
 S = \frac{1}{2}\int d^{(D+1)}x\sqrt{-g}\{ f(R,T)\} +\int d^{(D+1)}x\sqrt{-g}\mathcal{L}_{m},
\end{align}
where $f$ is an arbitrary function of $R$ and $T$ and $\mathcal{L}_m$ is the usual matter Lagrangian, from which the energy--momentum tensor is defined as,
\begin{equation}
 T^{(M)}_{\alpha\beta} = g_{\alpha\beta}\mathcal{L}_m - 2\frac{\partial \mathcal{L}_m}{\partial g^{\alpha\beta}},
\end{equation}
assuming that $\mathcal{L}_m$ only depends on $g_{\mu\nu}$. 
\par Carrying out the variation over the metric tensor yields,
\begin{align}
 f_RR_{\mu\nu} - \frac{1}{2}fg_{\mu\nu} + \left( g_{\mu\nu}\square - \nabla_\mu\nabla_\nu\right)f_R = T_{\mu\nu} - f_T\left( T_{\mu\nu} + \Theta_{\mu\nu} \right), \label{Field Equation in f(R,T)}
\end{align}
where,
\begin{align}
 f_R = \frac{\partial f}{\partial R}, \hspace{1cm} f_T = \frac{\partial f}{ \partial T}, \hspace{1cm}
 \Theta_{\mu\nu} = g^{\alpha\beta} \frac{\delta T_{\alpha\beta}}{\delta g^{\mu\nu}}.
\end{align}
\par Now, to derive modified Friedmann Equations or GDE, all we need is to calculate effective energy density and pressure, that is $\rho^{(T)}$ and $p^{(T)}$ from $T_{\mu\nu}^{(T)}$ related to Einstein tensor via \eqref{Einstein tensor in FLRW! (Theory description)}. So, we need $G_{\mu\nu}$. Trace of \eqref{Field Equation in f(R,T)} gives the Ricci Scalar,
\begin{equation}
 R = \frac{1}{f_R}\left( (1-f_T)T - f_T\Theta - D\square f_R +\frac{1}{2}f(D+1) \right), \label{Ricci Scalar in f(R,T)}
\end{equation}
Thus the Einstein tensor is,
\begin{align}
 G_{\mu\nu} &= T_{\mu\nu}^{(T)} \notag\\
 &=\frac{1}{f_R}\left( (1-f_T)(T_{\mu\nu} - \frac{1}{2}g_{\mu\nu}T) + f_T(\frac{1}{2}g_{\mu\nu}\Theta - \Theta_{\mu\nu}) \right. \\ &\hspace{4cm}\left.+ \frac{1-D}{4}fg_{\mu\nu} + (\nabla_\mu\nabla_\nu + (\frac{D}{2}-1)g_{\mu\nu}\square)f_R \right),
\end{align}
from which we calculate $\rho^{(T)}$ and $p^{(T)}$ using \eqref{effective densities}.
\begin{align}
 \rho^{(T)} &= \frac{1}{2f_R}\Big[ \rho^{(M)}+Dp^{(M)} + f_T\left(\rho^{(M)} + p^{(M)}\right) +\frac{(D-1)}{2}f + D\left( \ddot{f_R} + H(D-2)\dot{f_R} \right) \Big], \notag\\
 p^{(T)} &= \frac{1}{2f_R}\Big[ p^{(M)}(2-D) + \rho^{(M)} + f_T\left(\rho^{(M)} +p^{(M)}\right) \Big. \notag\\& \hspace{3.5cm} \Big.-\frac{(D-1)}{2}f - D\left( \ddot{f_R} + H(D-2)\dot{f_R} \right) +2\left( \ddot{f_R} - H\dot{f_R} \right) \Big]. \label{effective densities in f(R,T)}
\end{align}
To derive \eqref{effective densities in f(R,T)}, we made use of the listed properties in \eqref{useful properties in FLRW} and used $\Theta_{\mu\nu} = -2T^{(M)}_{\mu\nu} + p^{(M)}g_{\mu\nu}$ for perfect fluid.
\par Ultimately, this yields the geodesic deviation equation by \eqref{GDE in FLRW}:
\begin{align}
 -\ddot{\xi^\mu} &= \frac{1}{D(D-1)f_R}\epsilon\xi^\mu\Big[ \rho^{(M)}\big( 1 + f_T \big) + p^{(M)}\left( D + f_T \right) \Big. \notag\\ &\hspace{6cm}\Big. +\frac{(D-1)}{2}f + D\left(\ddot{f_R} + H(D-2)\dot{f_R} \right) \Big] \notag\\ 
 &+ \frac{1}{(D-1)f_R}E^2\xi^\mu\left[ (1+f_T)(p^{(M)} + \rho^{(M)}) +\ddot{f_R} - H\dot{f_R} \right], \label{GDE in f(R,T)}
\end{align}

and the generalized Friedmann Equations by \eqref{Generalized Friedmann Equations}:
\begin{align}
\dot{H} + H^2 &= -\frac{1}{Df_R}\Big[ \rho^{(M)} + f_T( \rho^{(M)} + p^{(M)}) -\frac{f}{2} - DH\dot{f_R} \Big], \\
H^2 &= \frac{1}{D(D-1)f_R}\Big[ \rho^{(M)} + Dp^{(M)} +f_T(\rho^{(M)} + p^{(M)}) \Big.\notag\\ &\hspace{3cm}\Big.+\frac{(D-1)}{2}f + D(\ddot{f_R} + H(D-2)\dot{f_R}) \Big] -\frac{k}{R^2}. \label{Generalized Friedmann in f(R,T)}
\end{align}

\par To confirm the validity of this treatment, we can compare against the GDE and Friedmann Equations for $f(R,T)$ in literature. 
\par In \cite{f(RT)_harko}, the Generalized Friedmann Equation is calculated for several $f(R,T)$ models in $3+1$ dimensions, although a general form is not provided. Here, we will show the matching of our calculation for one of the models presented in \cite{f(RT)_harko}, which can be demonstrated for any other models similarly. We consider, $k=0$ and $D=3$ where Friedmann Equations in \eqref{Generalized Friedmann Equations} can be re-written as
\begin{align}
 \rho^{(T)} &= 3H^2, \notag\\
 p^{(T)} &= -2\dot{H} -3H^2.
\end{align}
For the model of scalar field gravity, i.e. $f(R,T) = R + 2f(T^\phi)$ (see Eq.(54)) where $f_R = 1$ \eqref{effective densities in f(R,T)} simplifies considerably and with $p^{(M)} = \frac{1}{2}w\dot{\phi}^2 - V(\phi)$, $\rho^{(M)} = \frac{1}{2}w\dot{\phi}^2 + V(\phi)$ we attain
\begin{align}
 \rho^{(T)} &= \frac{1}{2}w\dot{\phi}^2 + V(\phi) + 2f'w\dot{\phi}^2 - f, \notag\\
 p^{(T)} &= \frac{1}{2}w\dot{\phi}^2 - V(\phi) +f.
\end{align}
This matches the result in \cite{f(RT)_harko}(see Eqn. (55) and (56)).
\par Before comparing the work in GDE for $f(R,T)$ we will check the limit case, namely $f(R)$ Gravity. By setting $f_T = 0$ we get back $f(R)$ theory of gravity, where \eqref{GDE in f(R,T)} yields:
\begin{align}
 -\ddot{\xi^\mu} &= \frac{1}{D(D-1)f_R}\epsilon\xi^\mu\Big[ \rho^{(M)} + Dp^{(M)} +\frac{(D-1)}{2}f + D\left(\ddot{f_R} + H(D-2)\dot{f_R} \right) \Big] \notag\\ 
 &+ \frac{1}{(D-1)f_R}E^2\xi^\mu\left[ p^{(M)} + \rho^{(M)} +\ddot{f_R} - H\dot{f_R} \right]. \label{GDE in f(R) limit}
\end{align}
\par In \cite{GDE_f(R)_Alvaro}, the GDE for $f(R)$ Gravity is partially expressed in terms of effective pressure and energy density as we defined in \eqref{effective densities}, (see Eq.~(33) in \cite{GDE_f(R)_Alvaro}). This is equivalent to our expression \eqref{GDE in FLRW} and subsequently to \eqref{GDE in f(R) limit}. Furthermore, the Friedmann Equations for $f(R)$ gravity provided in \cite{f(R)_Thomas} take the same form of \eqref{Generalized Friedmann Equations}. This confirms the validity of our calculations in the $f(R)$ limit.

\par Regarding the GDE in $f(R,T)$ gravity, the calculation provided in Eq.~(35) in \cite{GDE_f(RT)_bauffoi} differs from our result \eqref{GDE in f(R,T)}. However, taking the $f(R)$ limit of the result in \cite{GDE_f(RT)_bauffoi} yields a relation that also differs from what we obtained in \eqref{GDE in f(R) limit}, thus contradicts the consensus in the papers \cite{GDE_f(R)_Alvaro, f(R)_Thomas}. The source of the discrepancy appears to originate from the GDE for $f(R)$ gravity given in Eq.~(4.16) in \cite{GDE_f(R)_Alejandro} matching the problematic $f(R)$ limit of the GDE in \cite{GDE_f(RT)_bauffoi}. A closer examination of the derivation provided in \cite{GDE_f(R)_Alejandro} shows an inconsistency moving from Eq.~(4.9) to Eq.~(4.15), likely originating from missing terms in the relations between congruence vector field $v^\alpha$, foliation $u^\alpha$ and the induced metric $h_{\alpha\beta}$. Consequently, the GDE expressions proposed for f(R) gravity in \cite{GDE_f(R)_Alejandro} and for $f(R,T)$ in \cite{GDE_f(RT)_bauffoi} may require some refinement in view of this discrepancy. Therefore, our result \eqref{GDE in f(R,T)} provides an alternative formulation of GDE in $f(R,T)$ gravity with a consistent $f(R)$ limit \cite{GDE_f(R)_Alvaro, f(R)_Thomas}.

\subsubsection{Einstein-Aether Theory}
\par We consider a vector-tensor theory containing a constant norm vector field established over the whole spacetime introducing a preferred frame. It is motivated to introduce such a vector field coupled to gravity, called Aether Field $w^\alpha$, to break Lorentz symmetry which is predicted to be realized at high energies by considerations of quantum gravity.
\par As an example of such Lorentz symmetry breaking theory, we follow the model presented in \cite{Einstein_Aether_main} and use the generalized expressions from \cite{Einstein_Aether_metin}. The action is given by the most general diffeomorphism invariant form built from metric tensor and aether field, containing up to second order derivatives in the latter.
\begin{align}
 S = \frac{1}{2\kappa} \int d^{(D+1)}x \sqrt{-g}\,
\bigl[ R - 2\Lambda 
- K^{\mu\nu}{}_{\alpha\beta}\, \nabla_{\mu} w^{\alpha} \nabla_{\nu} w^{\beta}
+ \lambda (w_{\mu} w^{\mu} + \varepsilon) \bigr], \label{Aether:action}
\end{align}
where,
\begin{align}
 K^{\mu\nu}{}_{\alpha\beta}
= c_{1} g^{\mu\nu} g_{\alpha\beta}
+ c_{2} \delta^{\mu}_{\alpha} \delta^{\nu}_{\beta}
+ c_{3} \delta^{\mu}_{\beta} \delta^{\nu}_{\alpha}
- c_{4} v^{\mu} v^{\nu} g_{\alpha\beta}.
\end{align}
Here $\Lambda$ is the cosmological constant which we will omit in the following calculation, $w^\mu$ is the aether field coupling to gravity via covariant derivatives having constant norm $w^\mu w_\mu = -\epsilon$ fixed by the Lagrange multiplier $\lambda$. Timelike (null) aether field corresponds to $\epsilon = +1$ ($\epsilon = 0$) studied in \cite{Einstein_Aether_main, Einstein_Aether_metin}. The coefficients $c_1, c_2, c_3, c_4$ are dimensionless constants. Varying the action with respect to metric, aether field and the Lagrange multiplier yields the following field equations and the constraint equation respectively:
\begin{align}
 G_{\mu\nu} &= \nabla_{\alpha} \Bigl[
 J^{\alpha}{}_{(\mu} w_{\nu)}
 - J_{(\mu}{}^{\alpha} w_{\nu)}
 + J_{(\mu\nu)} w^{\alpha}
\Bigr]
+ c_{1} \bigl( \nabla_{\mu} w_{\alpha} \nabla_{\nu} w^{\alpha}
 - \nabla_{\alpha} w_{\mu} \nabla^{\alpha} w_{\nu} \bigr),
 \notag \\&\quad+ c_{4} {w'}_{\mu} {w'}_{\nu}
+ \lambda w_{\mu} w_{\nu}
- \frac{1}{2} L\, g_{\mu\nu}, \label{Aether: field eqn on metric}\\[10pt]
0 &= c_{4}\,{w'}_{\alpha} \nabla_{\mu} w^{\alpha}
+ \nabla_{\alpha} J^{\alpha}{}_{\mu}
+ \lambda w_{\mu}, \label{Aether: field eq on aether field}\\[10pt]
w^\mu w_\mu &= -\epsilon, \label{Aether: norm constraint equation}
\end{align}
where
\begin{equation}
 {w'}_\mu \equiv w^\alpha\nabla_\alpha w_\mu, \hspace{1cm} J^{\mu}{}_{\alpha} \equiv K^{\mu\nu}{}_{\alpha\beta}\, \nabla_{\nu} w^{\beta}, \hspace{1cm} L \equiv J^{\mu}{}_{\alpha}\, \nabla_{\mu} w^{\alpha}.
\end{equation}
Now imposing FLRW geometry, Theorem \eqref{theorem 1} applies and the algebra simplifies tremendously. From \eqref{general vector field result}, aether field must be along the foliation vector $u^\alpha$ to be FLRW compatible. Letting $w^\mu = \beta(t)u^\mu$, \eqref{Aether: norm constraint equation} forces $\beta^2 = \epsilon$. Then $w^\alpha$ satisfies the geodesic equation, thus $w'_\alpha = 0$. The tensor $J_{\mu\nu}$ is constructed from $w^\alpha$, therefore it is also FLRW-compatible. From \eqref{general second-rank tensor field result}, $J_{\mu\nu} = \alpha_1(t)u_{\mu}u_{\nu} + \alpha_2(t)h_{\mu\nu}$ and projecting along the foliation and the induced metric gives:
\begin{align}
 J_{\mu\nu} = -c_2\beta HDu_{\mu}u_\nu + \beta H(c_1 + c_3 + Dc_2)h_{\mu\nu}. \label{J tensor in aether}
\end{align}

Using the properties presented in section \ref{section: review of FLRW geometry}, we can calculate the following:
\begin{align}
 \nabla_\alpha J_{\mu\nu} &= u_\alpha\dot{H}\beta\big[ c_2 Du_\mu u_\nu - h_{\mu\nu}(c_1 + c_3 + Dc_2) \big] +\beta H^2(c_1 + c_3)(h_{\mu\alpha}u_\nu + h_{\alpha\nu}u_\mu), \\
 \nabla_\alpha J^\alpha{}_\mu&=u_\mu\beta D\big[ H^2(c_1 + c_3) -c_2\dot{H} \big], \notag\\
 w^\alpha(\nabla_\alpha J_{\mu\nu}) &= \beta^2\big[ \dot{H}(c_1 + c_3 + Dc_2)h_{\mu\nu} -D\dot{H}c_2u_\mu u_\nu \big], \notag \\
 L&=D(\beta H)^2(c_1 + c_3 + Dc_2), \\
 \lambda &= D\big[ c_2\dot{H}-H^2(c_1 + c_3) \big],
\end{align}
where the Lagrange multiplier $\lambda$ is attained by contracting \eqref{Aether: field eq on aether field} with $u^\alpha$. Then, Einstein tensor can be computed easily from \eqref{Aether: field eqn on metric} utilizing the symmetries: $\nabla_\alpha w_\mu = \nabla_\mu w_\alpha$ and $J_{\mu\nu} = J_{\nu\mu}$ as they are FLRW compatible.
\begin{align}
 G_{\mu\nu} = \beta^2(c_1 + c_3 + Dc_2)\Big[ -\frac{1}{2}DH^2u_\mu u_\nu + \left( \dot{H} + \frac{1}{2}DH^2\right)h_{\mu\nu} \Big]. \label{Aether: Einstein tensor}
\end{align}

\par It follows immediately that the dynamics of the theory become indistinguishable from General Relativity if the field is null, or the coupling constants satisfy $c_1 + c_3 + Dc_2 = 0$. 
The case of $\beta^2 = 0$ (i.e., $\epsilon = 0$) was previously studied in Ref. \cite{Einstein_Aether_metin}. In that work, one notes that cosmological solutions were derived for a spatially flat FLRW geometry using a null aether field ansatz $w^\mu = \phi(t)(1, 1/R(t), 0, 0)$. This vector is indeed null with respect to the FLRW metric for \(k=0\). However,
it is not itself FLRW-compatible, since it contains a preferred spatial component.
Thus, the ansatz should be understood as a null vector field on an FLRW metric,
rather than as a vector field preserving the FLRW symmetry at the level of the
field configuration. In the strictly symmetry-preserving sector considered here,
FLRW compatibility restricts the vector field to \(w^\mu=\beta(t)u^\mu\), and the
null condition then gives \(\beta(t)=0\). Therefore, the corresponding
FLRW-compatible null branch is trivial. This does not exclude the possibility
that a non-FLRW-compatible null ansatz may nevertheless lead to an effective
rank-two contribution of perfect-fluid form after substitution into the field
equations.
\par Given the triviality of the null case, we will consider $\beta^2 \equiv 1$, i.e. timelike aether field. Adding the matter to the action \eqref{Aether:action} we first calculate the effective densities from \eqref{effective densities}:
\begin{align}
 \rho^{(T)} &= -\frac{1}{2}DH^2(c_1 + c_3 + Dc_2) + \kappa\rho^{(M)}, \notag\\
 p^{(T)}&= (c_1 + c_3 + Dc_2)\left( \dot{H} + \frac{1}{2}DH^2\right) + \kappa p^{(M)}.
\end{align}
These yield the Friedmann Equations by \eqref{Generalized Friedmann Equations}:
\begin{align}
 H^2\left[ 1 - \frac{1}{D-1}(c_1 + c_3 + Dc_2) \right] &= \frac{-2\kappa}{D(D-1)}\rho^{(M)} -\frac{k}{R^2}, \notag\\
 \dot{H}\left[ 1 + \frac{1}{D-1}(c_1 + c_3 + Dc_2) \right] &=\frac{-\kappa}{(D-1)}(\rho^{(M)} + p^{(M)}) +\frac{k}{R^2}.
\end{align}
From this result, we see that the effect of aether is to renormalize the $\kappa$ factor (hence the gravitational constant) by $\left(1+\frac{1}{(D-1)}(c_1 + c_3 + Dc_2) \right)^{-1}$ and to introduce an additional perfect fluid on top of the matter content (see Section~8 in \cite{Einstein_Aether_main}).
\par We can also calculate the GDE by \eqref{GDE in FLRW}. It gives,
\begin{align}
 \Ddot{\xi^\alpha} &= \frac{\xi^\alpha}{(D-1)}
(c_1 + c_3 + Dc_2)(H^2\epsilon- E^2\dot{H}) \notag\\ \hspace{0.2cm}&- \frac{\xi^\alpha}{(D-1)}E^2\kappa\big(\rho^{(M)} + p^{(M)}\big) - 2\frac{\xi^\alpha}{D(D-1)}\epsilon\kappa\rho^{(M)}.
\end{align}
The first term is the contribution of Aether Field and the remainder correspond to the bare General Relativity sector.

\section{Conclusion}
\label{sec:conclusion}

In this work, we have given a systematic analysis of tensor fields compatible with the symmetries of FLRW spacetime. The central result is Theorem \ref{theorem 1}, which states that, in $(D+1)$ dimensions and for $D>N$, every covariant rank-$N$ FLRW-compatible tensor admits a finite decomposition into $g$-$u$ monomials, namely tensor products generated by the spacetime metric $g_{\mu\nu}$ and the unit timelike vector $u_\mu$, with scalar coefficients depending only on cosmological time. This provides a general tensorial classification of the structures allowed by spatial homogeneity and isotropy, independently of the detailed dynamics of any particular gravitational theory.

The condition $D>N$ plays an essential role in the classification. It excludes additional parity-odd invariant tensors constructed from the spatial Levi--Civita tensor. Equivalently, in this range of ranks, the invariant structures under the connected spatial rotation group are generated solely by metric pairings and timelike factors. The invariant-theoretic argument presented in Appendix \ref{appendix:invariant-proof} makes this point explicit and complements the Schur--Weyl-based proof given in the main text.

As an immediate physical consequence, the rank-two sector of the decomposition implies that, whenever the metric field equations of a modified theory of gravity, whose dynamical fields are imposed to be FLRW compatible, can be rearranged into an Einstein-like form on an FLRW background, the correction term must have the effective perfect-fluid form
\begin{equation}
    Q_{\mu\nu}=\alpha(t)g_{\mu\nu}+\beta(t)u_\mu u_\nu.
\end{equation}
 Thus, the FLRW background equations of such theories are
universally restricted to the perfect-fluid tensorial sector. This statement is
tensorial and background-level: different theories remain dynamically distinct
through the scalar functions $\alpha(t)$ and $\beta(t)$, possible nonminimal
couplings, and the additional field equations of the model. In this sense, the
present work complements the scalar--vector--tensor FLRW universality result of
\cite{SVT-FLRW} and the universality established for purely curvature-driven
theories such as $f(R)$ and higher-order gravity
\cite{PF in f(R), PF in higher order GR}. While the former targets specific field
classes, and the latter deduces the perfect-fluid structure from the curvature
sector on generalized Robertson--Walker backgrounds with divergence-free
conformal curvature, formulated in \cite{PF in extended gravity} through the
notion of a perfect scalar, our approach derives this universality for FLRW as a
rank-two corollary of a top-down, rank-$N$ algebraic decomposition.

We also showed how the decomposition organizes the cosmological field equations and the geodesic deviation equation in modified gravity. Once the correction terms are written in the effective-fluid form, the generalized Friedmann equations and the curvature terms entering the geodesic deviation equation can be expressed in terms of time-dependent effective density and pressure variables. The examples of $f(R,T)$ gravity and Einstein--Aether gravity illustrate how model-dependent information enters through the specific scalar functions determined by the action and by the remaining field equations.

\par As a final remark, throughout our discussion we have assumed the FLRW compatibility of the dynamical fields as an independent postulate, which may appear to restrict the universality of the decomposition theorem. However, in a wide class of gravitational theories, these fields are not merely assumed to be symmetric, but instead dynamically inherit the symmetries of the background spacetime metric. Since the present work is self-contained, a rigorous analysis detailing the exact boundaries of this inheritance (specifically, identifying the precise constraints on the action in \eqref{generalized action} and the intrinsic symmetries of the fields) is deferred to a forthcoming publication. Furthermore, while the current derivation is strictly formulated within the framework of a Levi-Civita connection, this geometric restriction can be generalized to spacetimes featuring non-metricity and non-vanishing torsion under certain compatibility conditions; this extension will likewise be addressed in future work.

\newpage
\section*{Acknowledgment}
The work of YH was supported by the Science Academy’s Distinguished Young Scientist Awards Program (BAGEP)
2025.
\section*{Declarations}

\subsection*{Data Availability Statement}

Data sharing is not applicable to this article, as no datasets were generated, used, or analyzed during the current study.

\subsection*{Conflict of Interest Statement}

The authors declare that they have no conflict of interest.

\section*{Appendix A: Invariant-Theoretic Proof of the Monomial Decomposition}
\label{appendix:invariant-proof}

In this appendix we give an alternative proof of
Theorem~\ref{theorem 1} based directly on the invariant theory of the
spatial isotropy group. This proof is complementary to the Schur--Weyl-based
argument in the main text and makes explicit why, in the rank range considered
here, the only admissible building blocks are the metric and the FLRW time-like
vector.

Fix a point \(p\) in the FLRW spacetime. The cotangent space decomposes as
\begin{equation}
    T_p^*M = \mathrm{span}\{u_\mu\}\oplus \Sigma_p^* ,
\end{equation}
where \(\Sigma_p^*\) is the cotangent space to the homogeneous and isotropic
spatial hypersurface at \(p\). The isotropy group at \(p\) acts trivially on the
one-dimensional time direction generated by \(u_\mu\), and acts in the standard
way on \(\Sigma_p^*\). Thus, an FLRW-compatible covariant tensor
\(T_{\mu_1\cdots\mu_N}\), evaluated at \(p\), defines an invariant element of
\((T_p^*M)^{\otimes N}\) under the spatial isotropy group.

Using the above splitting, each component of \(T_{\mu_1\cdots\mu_N}\) can be
decomposed according to the number \(q\) of spatial slots and \(N-q\) time-like
slots. The time-like slots are represented by factors of \(u_\mu\). Therefore,
the only nontrivial part of the classification is the classification of
invariant covariant tensors on \(\Sigma_p^*\) of rank \(q\).

For the full orthogonal group \(O(D)\), the first fundamental theorem of
invariant theory for the orthogonal group states that invariant multilinear
forms are generated by pairwise contractions with the invariant bilinear form;
equivalently, invariant covariant tensors are spanned by products of the
spatial metric \(h_{ab}\)
\cite{Weyl_Classical_Groups,Goodman_Wallach,Procesi_Lie_Groups}. Hence there
are no invariant tensors of odd spatial rank, while for \(q=2r\) the invariant
tensors are linear combinations of all pairings
\begin{equation}
    h_{a_{i_1}a_{j_1}}\cdots h_{a_{i_r}a_{j_r}},
\end{equation}
where the spatial indices are paired in all distinct possible ways. After
restoring the time-like slots, the spacetime tensor is therefore a linear
combination of monomials made from \(h_{\mu\nu}\) and \(u_\mu\).

Since
\begin{equation}
    h_{\mu\nu}=g_{\mu\nu}+u_\mu u_\nu ,
\end{equation}
any monomial built from \(h_{\mu\nu}\) and \(u_\mu\) can be rewritten as a
linear combination of monomials built from \(g_{\mu\nu}\) and \(u_\mu\).
Therefore,
\begin{equation}
    T_{\mu_1\cdots\mu_N}
    =
    \sum_{r=0}^{\lfloor N/2\rfloor}
    \sum_{\sigma\in \mathcal{S}_{N,r}}
    \alpha_{\sigma r}(t)\,
    M^{(r,\sigma)}_{\mu_1\cdots\mu_N},
\end{equation}
where \(\mathcal{S}_{N,r}\) denotes a set of representatives generating the
distinct \(g-u\) monomials of rank \((N,r)\). This is precisely the monomial
decomposition stated in Theorem~\ref{theorem 1}.

For the connected rotation group \(SO(D)\), one must also consider the spatial
Levi--Civita tensor \(\epsilon_{a_1\cdots a_D}\), which is invariant under
orientation-preserving rotations. This is precisely why the hypothesis \(D>N\)
is imposed in Theorem~ \ref{theorem 1}: for ranks below \(D\), no
Levi--Civita-type invariant can occur. Under this hypothesis, the \(SO(D)\) and
\(O(D)\) classifications coincide for the tensor ranks considered in the
theorem. Finally, the coefficient functions depend only on the cosmological
time because spatial homogeneity eliminates spatial dependence, while isotropy
fixes the allowed spatial tensorial structures.

\end{document}